\definecolor{JLHgreen}{RGB}{50,100,50}
\newcommand{\mh}[1]{\textcolor{red}{#1}}
\newcommand{\david}[1]{\textcolor{blue}{#1}}
\newcommandx{\now}[2][1=]{\todo[inline, linecolor=red,backgroundcolor=red!25,bordercolor=red,#1]{#2}}
\newcommandx{\add}[2][1=]{\todo[inline, linecolor=teal,backgroundcolor=teal!25,bordercolor=teal,#1]{#2}}
\newcommandx{\improve}[2][1=]{\todo[inline, linecolor=violet,backgroundcolor=violet!25,bordercolor=violet,#1]{#2}}
\newcommandx{\change}[2][1=]{\todo[inline, linecolor=blue,backgroundcolor=blue!25,bordercolor=blue,#1]{#2}}
\newcommandx{\delete}[2][1=]{\todo[inline, linecolor=orange,backgroundcolor=orange!25,bordercolor=orange,#1]{#2}}
\renewcommand{\emph}[1]{#1}
\newtheorem{theorem}{Theorem}
\newtheorem{lemma}[theorem]{Lemma}
\newtheorem{corollary}[theorem]{Corollary}
\newtheorem{definition}[theorem]{Definition}
\newtheorem{example}[theorem]{Example}
\newtheorem{remark}[theorem]{Remark}
\newcommand{\FF}{\mathbb{F}}
\newcommand{\NN}{\mathbb{N}}
\DeclareMathOperator{\Ker}{ker}
\DeclareMathOperator{\Ima}{im}
\def \a{\alpha}
\def \b{\beta}
\def \v{\nu}
\begin{document}

\title{$c^3$-Locally Testable Codes from Lossless Expanders}

\author{
\IEEEauthorblockN{Ting-Chun Lin}
\IEEEauthorblockA{
    \textit{University of California San Diego} \\
    \textit{and Hon Hai (Foxconn) Research Institute} \\
    Taipei, Taiwan \\
    til022@ucsd.edu
}
\and
\IEEEauthorblockN{Min-Hsiu Hsieh}
\IEEEauthorblockA{
    \textit{Hon Hai (Foxconn) Research Institute} \\
    Taipei, Taiwan \\
    min-hsiu.hsieh@foxconn.com
}
}

\sloppy

\maketitle

\begin{abstract}

A locally testable code (LTC) is an error correcting code with a property tester. The tester tests if a word is codeword by reading constant random bits and rejects the word with probability proportional to the distance from the word to the closest codeword. An important open question until recently is whether there exist $c^3$-LTCs which are LTCs with constant rate, constant relative distance and constant locality. In this work, we construct a new LTC family using 1-sided lossless expanders and balanced products. 
\end{abstract}


\section{Introduction}
\label{sec:introduction}

The study of error correcting code has a long history dating back to Shannon's work \cite{shannon1948mathematical}.
He showed that there exist codes with constant rate and linear distance by taking a random codebook.
This simple idea springs to many different generalizations in different settings.
In the case of locally testable codes (LTC), one asks if there exist codes with constant rate, linear distance and a local tester.
The local tester tests if a word is a codeword
  by reading $w$ bits of the word
  and rejects the word with probability proportional to its distance from the closest codeword.
When the locality parameter $w$ is a constant, the code is called the $c^3$-LTCs, which stands for constant rate, constant relative distance, and constant locality.
Similar variations that concern locality include low-density parity-check codes (LDPC) \cite{gallager1962low}, locally decodable codes (LDC), locally correctable codes (LCC) \cite{kopparty2017high}.

For LDPC codes, a random sparse construction gives constant rate and linear distance.
For LTC, however, a random construction does not guarantee the existence of a tester.
Nevertheless, there are techniques beyond random construction and better LTC has been constructed over time \cite{polishchuk1994nearly, goldreich2006locally, ben2003randomness, ben2006robust, ben2005simple, dinur2007pcp, kopparty2017high, gopi2018locally}.
Many of those works have close connections with probabilistically
checkable proofs (PCP) \cite{arora1998proof}
which also has a local tester.
Despite these progress, the existence of the ultimate $c^3$-LTCs remains uncertain. 
Recently, two papers \cite{panteleev2021asymptotically} \cite{dinur2021locally} show the existence of $c^3$-LTCs using Tanner codes.
In this work, we show the existence of $c^3$-LTCs using lossless expanders.

\subsection{Main results and technical tools}

In this paper, we construct an explicit family of LTCs with rate arbitrarily close to $1$. 

The construction is based on the balanced product of two 1-sided lossless expander graphs with group symmetry.
Balanced product \cite{breuckmann2021balanced}
  is the Cartesian product of 
  two graphs with a common free group action
  quotient over the diagonal group action.
A 1-sided lossless expander \cite{capalbo2002randomness}
  is a regular bipartite graph with vertex expansion approximately equal to its degree.
The balanced product of two bipartite graphs produces a new graph with two-dimensional features.
We then take the adjacency matrix of the new graph as the parity-check matrix to obtain the code.

The most difficult part in proving the code being LTC is to show constant soundness. We do so by showing another code from the graph has a new property called the small set LTC.
  
\subsection{Outline}
Section~\ref{sec:preliminary} reviews the definition of LTCs and the mathematical tools used in the construction.
Section~\ref{sec:distance-LTC} constructs and proves the existence of $c^3$-LTC.

\section{Preliminary} 
\label{sec:preliminary}
\subsection{Classical error correcting codes}

Here, we review classical linear code and locally testable code. 


A classical linear code $C=C(H)$ is described through a parity-check matrix {$H \in \FF_2^{m\times n}$}, where $C(H) = \{ \bm{x} \in \FF_2^n : H \bm{x} = 0 \}$. 
A vector in $C$ is called a \emph{codeword}. 
{The (Hamming) weight of a vector $\bm{x} \in \FF_2^n$ is the number of non-zero entries.} Important parameters of the code $C(H)$ are
\begin{itemize}
\item the \emph{length} of the code: $n$;
\item the \emph{dimension} of the code: $k = \dim(C)$;
\item the \emph{distance} of the code: $d$, which is the minimum Hamming weight of a non-zero codeword;
\item the weight of the code: $w$, which is the maximal weight of all column and row vectors in $H$.
\end{itemize}

A classical linear code $C_{\rm gen}(M)$ can also be defined through a generator matrix $M \in \FF_2^{m' \times n}$ where $C_{\rm gen}(M)$ is the row span of $M$.

Now, we define classical locally testable codes (LTCs). We consider the definition of a ``strong" LTC which implies all other definitions of locally testable codes
  \cite{goldreich2006locally}
  \cite{goldreich2017introduction}.

\begin{definition}[locally testable code (LTC)]
  A classical linear code $C=C(H)$ is $(w,s)$-\emph{locally testable}
    if it has a parity-check matrix $H \in \FF_2^{m\times n}$
    such that the weight of row vectors are at most $w$
    and for any vector $\bm{x} \in \FF_2^n$
  \begin{equation}
    \frac{1}{m} |H \bm{x}| \ge \frac{s}{n} d(\bm{x}, C),
  \end{equation}
  where $d(\bm{x}, C) = \min_{c \in C} |\bm{x} - c|$
    and $|\cdot|$ is the Hamming weight.
\end{definition}

Note that $w$ is related to the number of queries and $s$ is related to soundness in the original definition of LTC.

\subsection{Graphs}

Here, we review the definition of bipartite graphs, adjacency matrices, regular graphs, Cayley graphs, and lossless expander graphs.

We use $\Xi \equiv (V_0, V_1, E)$ to denote a bipartite graph, where $V_0$, $V_1$ are the sets of vertices on the 2 sides, and $E \subseteq V_0 \times V_1$ is the set of edges between the vertices. We use variables $\nu_0, \nu_1$ to denote subsets of $V_0, V_1$, and use variables $x_0, x_1$ to denote individual vertices in $V_0, V_1$. We use $E(\nu_0, \nu_1)$ to denote the set of edges between $\nu_0$ and $\nu_1$.
The neighbors of a vertex $x_0$ within $\nu_1$ is denoted as $N_{\nu_1}(x_0)$. 
The degree of a vertex $x_0$ in $\nu_1$ is defined as the size of the neighbors of $x_0$ in $\nu_1$, i.e. $\deg_{\nu_1}(x_0) \coloneqq |N_{\nu_1}(x_0)|$. 

The adjacency matrix of a bipartite graph 
  is a matrix $L(E) \in \FF_2^{V_0 \times V_1}$,
  where $L(E)_{x_0, x_1} = 1$ if $(x_0, x_1) \in E$,
  otherwise $L(E)_{x_0, x_1} = 0$.
Equivalently,
  $L(E) e_{x_0} = \sum_{x_1 \in N_{V_1}(x_0)} e_{x_1}$,
  where $e_{x_0}$ and $e_{x_1}$ are the basis vectors in $\FF_2^{V_0}$ and $\FF_2^{V_1}$.

A bipartite graph $\Xi$ is $(w_0, w_1)$-regular if the degrees of all vertices in $V_0$ are equal to $w_0$, and the degrees of the all vertices in $V_1$ are equal to $w_1$. Notice that $w_0$ and $w_1$ are the weights of the column and row vectors of the adjacency matrix.

\subsubsection{Cayley graphs and graphs with free group action}

Here, we discuss graphs with free group action, which is crucial for the balanced product construction. Cayley graph is the key example for a graph with free group action. In fact, all graphs with free group action can be decomposed into Cayley graphs.

A bipartite graph $\Xi$ is $G$-invariant if there exist $G$-actions on $V_0$ and $V_1$, such that if $(x_0, x_1) \in E$, then $(gx_0, gx_1) \in E$.
We also say the graph has $G$-symmetry.
Later we only consider a special case of $G$-action, where the action is free.
A group action on a set $V$ is free, if for all $x \in V$, $gx = x$ implies $g = 1$.
A group action on $\Xi$ is free, if the actions on both $V_0$ and $V_1$ are free.



The left (acting) bipartite Cayley graph,
  $\Gamma_{\textnormal{left}}(G, A) = (V_0, V_1, E)$, 
  is a bipartite graph constructed from a group $G$ 
  and a generating set $A \subseteq G$. 
The graph has vertices $V_0 = G, V_1 = G$, 
  and edges $E = \{(g, ag) : g \in G, a \in A\}$. 
We can also have the generating set acts from the right, 
  which defines the right bipartite Cayley graph, 
  $\Gamma_{\textnormal{right}}(G, B)$. 
The left (right) bipartite Cayley graph is $G$-invariant by the right (left) group action which acts freely.

\subsubsection{Lossless expander}

A lossless expander graph is a regular graph where the vertex expansion is approximately equal to its degree. 

\begin{definition}[Small set vertex expansion]
A bipartite graph, $\Xi \equiv (V_0, V_1, E)$, has $(c, \alpha)$-vertex expansion from $V_0$ to $V_1$ if for any subset $\nu_0 \subseteq V_0$ with $|\nu_0| < c |V_0|$, $|N_{V_1}(\nu_0)| \ge \alpha |\nu_0|$. 
\end{definition}



\begin{definition}[1-sided lossless expander]
A $(w_0, w_1)$-regular bipartite graph $\Xi $ is a 1-sided $(c, \epsilon)$-lossless expander from $V_0$ to $V_1$, if it has $(c, (1-\epsilon)w_0)$-vertex expansion from $V_0$ to $V_1$.
\end{definition}

\subsection{Balanced product construction}


The balanced product construction is a general method that can be applied to any two objects with a common group action. 
In our case, it would be two graphs with a common free group action.
The balanced product is obtained by first taking the Cartesian product, then taking the quotient over the diagonal group action.



We first review the definition of hypergraph product \cite{tillich2013quantum}
  which is the same as Cartesian product. 


\begin{definition}[Hypergraph product]
Given two bipartite graphs $\Xi_X = (V_{X, 0}, V_{X, 1},E_X)$ and $\Xi_Y = (V_{Y, 0}, V_{Y, 1}, E_Y)$, the \emph{hypergraph product} of $\Xi_X$ and $\Xi_Y$, $\Xi_X \times \Xi_Y$, has 
\begin{itemize}
  \item vertices: $V_{00} = V_{X,0} \times V_{Y,0},
    V_{10} = V_{X,1} \times V_{Y,0},
    V_{01} = V_{X,0} \times V_{Y,1},
    V_{11} = V_{X,1} \times V_{Y,1}$,
  \item edges: \\
    $E_{*0} = \{((x_0,y_0), (x_1,y_0)) : (x_0,x_1) \in E_X, y_0 \in V_{Y,0}\}, \\
    E_{*1} = \{((x_0,y_1), (x_1,y_1)) : (x_0,x_1) \in E_X, y_1 \in V_{Y,1}\}, \\
    E_{0*} = \{((x_0,y_0), (x_0,y_1)) : x_0 \in V_{X,0}, (y_0,y_1) \in E_Y\}, \\
    E_{1*} = \{((x_1,y_0), (x_1,y_1)) : x_1 \in V_{X,1}, (y_0,y_1) \in E_Y\}$,
  \item faces: $F = \{((x_0,y_0), (x_1,y_0), (x_0,y_1), (x_1,y_1)) : (x_0,x_1) \in E_X, (y_0,y_1) \in E_Y\}$.
\end{itemize}
\end{definition}  

\begin{figure}
  \centering
  \begin{tikzpicture}
    \node (V00) at (0,0) {$V_{00}$};
\node (V10) at (0,-2) {$V_{10}$};
\node (V01) at (2,0) {$V_{01}$};
\node (V11) at (2,-2) {$V_{11}$};

\draw (V00) to node[auto, swap] {$E_{*0}$} (V10);
\draw (V00) to node[auto] {$E_{0*}$} (V01);
\draw (V01) to node[auto] {$E_{*1}$} (V11);
\draw (V10) to node[auto, swap] {$E_{1*}$} (V11);

\node (F) at (1,-1) {$F$};
  \end{tikzpicture}
  \caption{Balanced product of bipartite graphs.}
  \label{fig:hypergraph_BG}
\end{figure}
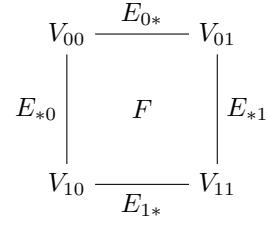


When the graph $\Xi_X$ and $\Xi_Y$ have $G$-action, the hypergraph product $\Xi_X \times \Xi_Y$ has a $G$-action defined by the diagonal $G$-action. After quotienting this action, we obtain the balanced product.

\begin{definition}[Balanced product]
The \emph{balanced product} of two bipartite graphs $\Xi_X, \Xi_Y$ with free $G$-action, denoted by $\Xi_X \times_G \Xi_Y$, 
  has 
\begin{itemize}
  \item vertices: $V_{\a\b} \coloneqq V_{X,\a} \times V_{Y,\b} / G,$
    for $\a, \b \in \{0, 1\}$,
    where $(x_\a, y_\b) \sim (gx_\a, gy_\b)$,
  \item edges: for $\a, \b \in \{0, 1\}$, \\
    \hbox to\columnwidth{%
        $E_{*\b} \coloneqq \{((x_0,y_\b), (x_1,y_\b)) : (x_0,x_1) \in E_X, y_\b \in V_{Y,\b}\} / G$,
    }\\
    where  $((x_0,y_\b), (x_1,y_\b)) \sim ((gx_0,gy_\b), (gx_1,gy_\b))$, and \\
    \hbox to\columnwidth{%
      $E_{\a*} \coloneqq \{((x_\a,y_0), (x_\a,y_1)) : x_\a \in V_{X,\a}, (y_0,y_1) \in E_Y\} / G$,
    }\\
    where $((x_\a,y_0), (x_\a,y_1)) \sim ((gx_\a,gy_0), (gx_\a,gy_1))$,
  \item faces:
     $F \coloneqq \{((x_0,y_0), (x_1,y_0), (x_0,y_1), (x_1,y_1)) :$ \\
    \hbox{}\hfill $(x_0,x_1) \in E_X, (y_0,y_1) \in E_Y\} / G$, \\
    where $((x_0,y_0), (x_1,y_0), (x_0,y_1), (x_1,y_1)) \sim$ \\
    \hbox{}\hfill $((gx_0,gy_0), (gx_1,gy_0), (gx_0,gy_1), (gx_1,gy_1))$.
\end{itemize}
\end{definition}

The balanced product of bipartite graphs is illustrated in 
  Figure~\ref{fig:hypergraph_BG}.
We use $(V^*, E^*, F)$ as a shorthand for
  $(V_{00}, V_{10}, V_{01}, V_{11}, E_{*0}, E_{*1}, E_{0*}, E_{1*}, F)$.


An important example of a balanced product graph is the left-right Cayley graph. 

\begin{definition}[Left-right Cayley graph] \label{def:left-right-cayley-graph}

The left-right bipartite Cayley graph $\Gamma_2(G, A, B) \coloneqq \Xi_X \times_G \Xi_Y$
  where $\Xi_X \coloneqq\Gamma_{\rm right}(G, A^{-1}), \Xi_Y\coloneqq\Gamma_{\rm right}(G, B)$
  are right bipartite Cayley graphs with $A^{-1} = \{a^{-1} : a \in A\}$.

Explicitly, the graph has
\begin{itemize}
  \item vertices: $V_{00} \cong V_{10} \cong V_{01} \cong V_{11} \cong G \times G / G \cong G$,
  \item edges: 
  \begin{align*}
    E_{*0} &= \{(g, ag) : g \in G, a \in A\}, \\
    E_{*1} &= \{(gb, agb) : gb \in G, a \in A\}, \\
    E_{0*} &= \{(g, gb) : g \in G, b \in B\}, \\
    E_{1*} &= \{(ag, agb) : ag \in G, b \in B\},
  \end{align*}
  \item faces: $\{(g, ag, gb, agb): g \in G, a \in A, b \in B\}$,
\end{itemize}
where $G \times G / G \cong G$
uses the bijection $[(x_\a, y_\b)] \mapsto x_\a^{-1} y_\b$
for $\a, \b \in \{0,1\}$.
$[(x_\a, y_\b)]$ denotes the equivalent class of $(x_\a, y_\b)$ in $G \times G / G$.
\end{definition}

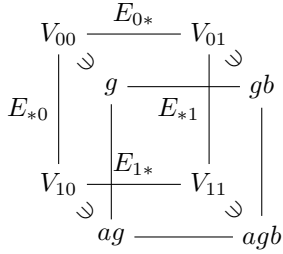
\begin{figure}
  \centering
  \begin{tikzpicture}
    \node (V00) at (0,0) {$V_{00}$};
\node (V10) at (0,-2) {$V_{10}$};
\node (V01) at (2,0) {$V_{01}$};
\node (V11) at (2,-2) {$V_{11}$};

\draw (V00) to node[auto, swap] {$E_{*0}$} (V10);
\draw (V00) to node[auto] {$E_{0*}$} (V01);
\draw (V01) to node[auto, swap] {$E_{*1}$} (V11);
\draw (V10) to node[auto] {$E_{1*}$} (V11);

\draw (0,0)+(0.7,-0.7) node (g) {$g$};
\draw (0,-2)+(0.7,-0.7) node (ag) {$ag$};
\draw (2,0)+(0.7,-0.7) node (gb) {$gb$};
\draw (2,-2)+(0.7,-0.7) node (agb) {$agb$};

\path (g) -- node [sloped] {$\ni$} (V00);
\path (ag) -- node [sloped] {$\ni$} (V10);
\path (gb) -- node [sloped] {$\ni$} (V01);
\path (agb) -- node [sloped] {$\ni$} (V11);

\draw (g) to (ag);
\draw (g) to (gb);
\draw (gb) to (agb);
\draw (ag) to (agb);
  \end{tikzpicture}
  \caption{Left-right bipartite Cayley graph.}
  \label{fig:left-right-cayley-graph}
\end{figure}

The left-right bipartite Cayley graph is illustrated in 
  Figure~\ref{fig:left-right-cayley-graph}.

Note that we labeled the edges to point out the four vertices, $g \in V_{00}, ag \in V_{10}, gb \in V_{01}, agb \in V_{11}$, form a square. The square appears because the left action commutes with the right action. The existence of squares is essential for the Tanner code construction in \cite{panteleev2021asymptotically} and \cite{dinur2021locally}.

\subsection{Chain complex} \label{sec:chain-complex}

Here, we introduce the language of chain complexes from homological algebra. 

\begin{definition}[Chain complex]
A \emph{chain complex} $\mathcal{C}$ is a sequence of vector spaces, $C_i$, together with linear maps, $\partial_i: C_i \rightarrow C_{i-1}$ called the \emph{boundary operators}, where these boundary operators satisfy
\begin{equation}
      \partial_{i-1} \partial_i  = 0.
    \end{equation}
\end{definition}

The kernel and the image of a boundary operator is defined as
$\Ker \partial_i \coloneqq \{c_i \in C_i : \partial_i c_i = 0\}$,
$\Ima \partial_i \coloneqq \{\partial_i c_i \in C_{i-1} : c_i \in C_i\}$.

\section{Main Result: New family of LTCs} \label{sec:distance-LTC}
In this section, we will construct and prove the existence of $c^3$-LTC which are LTCs with constant rate, constant relative distance and constant locality. The construction is based on the balanced product of lossless expander graphs. 

\begin{theorem}[LTC] \label{thm:LTC}
For all $0 < r < 1$, there exist $\delta, s > 0$, $w \in \NN$ and an explicit construction of an infinite family of error-correcting codes $\{C_i\}$ with parameters $[n_i, k_i, d_i]$, 
such that $n_i$ approaches infinity as $i$ increases, $k_i/n_i \ge r, d_i/n_i \ge \delta$ and $C_i$ is $(w,s)$-locally testable. 
\end{theorem}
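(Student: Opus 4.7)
The plan is to build the code from the balanced product of two 1-sided lossless expanders with common group symmetry. First, I would invoke the Capalbo--Reingold--Vadhan--Wigderson construction to obtain, for expansion parameters of my choosing, an explicit infinite family of $(w_0, w_1)$-regular bipartite graphs $\Xi_X$ and $\Xi_Y$ that are 1-sided $(c, \epsilon)$-lossless expanders and admit a common free action of a group $G$ with $|V_{X,0}|/|G|$ and $|V_{Y,0}|/|G|$ constant. I would then form the balanced product $\Xi_X \times_G \Xi_Y$, and read off a 2D square complex from its vertex sets $V_{\alpha\beta}$, edge sets $E_{*\beta}, E_{\alpha*}$, and face set $F$. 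The code is obtained by taking its bits to be indexed by one of the vertex/face sets and its checks by the incident edges, so that the parity-check matrix $H$ is assembled from (a block of) the boundary operator of this complex. Locality is then immediate: each check touches only $O(w_0 + w_1) = O(1)$ bits.

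For the rate I would use a straightforward dimension count: the number of bits scales like $|G|\, w_0 w_1$ while the number of checks scales like $|G|(w_0 + w_1)$, so by fixing $w_0, w_1$ large (as a function of the target $r$) one can guarantee $k/n \geq 1 - m/n \geq r$. For the relative distance I would run a Sipser--Spielman style argument in two dimensions: the support of any putative short codeword must project to small vertex sets in each factor, and the 1-sided lossless expansion of each factor then produces a unique-neighbor edge that violates the parity-check equation --- unless the support is empty. Both steps are fairly routine once the chain complex is set up correctly.

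The main obstacle, and the heart of the theorem, is constant soundness. My plan is to introduce an auxiliary code on the same balanced product --- the small set LTC alluded to in the introduction --- and to prove that its associated boundary operator has linear locally testable distance: for every small-weight syndrome $\sigma$ lying in its image, there is a preimage $x$ with $|x| = O(|\sigma|)$. I would try to establish this by a two-dimensional peeling argument. Slicing along the $Y$-factor, I apply the 1-sided lossless expansion of $\Xi_X$ on each slice to identify unique-neighbor columns of $\operatorname{supp}(\sigma)$ that can be corrected with few flips, then alternate with the symmetric step along the $X$-factor, and iterate. The delicate point is ensuring that this peeling does not stall across slices; this is precisely where 1-sided lossless (rather than merely spectral) expansion is essential, since it is the unique-neighbor guarantee on small sets that allows each round of peeling to terminate with $O(|\sigma|)$ total corrections.

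Once linear locally testable distance is established, a short case split on $|\sigma|$ --- using the $|x| = O(|\sigma|)$ bound below the threshold and the trivial bound $|x| \leq n$ above it --- upgrades it to $(w, s)$-local testability with $s = \Theta(1)$. Combining this with the rate and distance estimates from the previous paragraph yields the claimed $c^3$-LTC.
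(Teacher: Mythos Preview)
Your high-level architecture matches the paper: invoke CRVW for two 1-sided lossless expanders with a common free $G$-action, form the balanced product, take the induced 3-term chain complex, and read off $H=\partial_2$ with bits on $V_{00}$ and checks on $V_{10}\cup V_{01}$. Locality and linear distance go through essentially as you sketch. Two points deserve correction or sharpening.

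\textbf{Rate.} In this construction all four vertex sets $V_{\alpha\beta}$ have size $\Theta(|G|)$; they do \emph{not} scale like $|G|w_0w_1$ versus $|G|(w_0+w_1)$. The rate bound is $k/n\ge 1-|V_{10}|/|V_{00}|-|V_{01}|/|V_{00}|=1-w_\downarrow/w_\uparrow-w_\rightarrow/w_\leftarrow$, so one tunes the \emph{degree ratios} of the two factors, not the degrees themselves, to achieve $k/n\ge r$.

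\textbf{Soundness.} Here your plan diverges from the paper and, as written, has a real gap. The paper does reduce soundness to ``linear locally testable distance'' as you say, and then further to a \emph{locally minimal distance}: after running a greedy flip decoder, one is left with a short $c_1=(v_{10},v_{01})\in\Ker\partial_1$ that is (weighted) locally minimal, and the task is to show $c_1=0$. This last step is \emph{not} done by iterated peeling across slices. Instead the paper proves a quantitative ``small set LTC'' inequality $|\partial_1 c_1|_w\ge(\tfrac12-8\epsilon)|c_1|_w$ via a square-counting argument: unique-neighbor expansion of each factor lower-bounds $|\partial_1 c_1|$ up to a correction by the number of faces $|f|$ touching both $v_{10}$ and $v_{01}$ (this uses the square-completion property of the balanced product), and $|f|=\sum_{x_{00}}\deg_{v_{10}}(x_{00})\deg_{v_{01}}(x_{00})$ is then bounded using a degree-splitting lemma for lossless expanders together with the locally-minimal constraint $\deg_{v_{10}}(x_{00})/w_\downarrow+\deg_{v_{01}}(x_{00})/w_\rightarrow\le 1$. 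Your alternating-slice peeling sketch does not supply this mechanism, and in particular you are missing the one ingredient the paper flags as essential: the two expanders must be chosen \emph{asymmetrically}, with $w_\uparrow\epsilon_\rightarrow\le\epsilon$, precisely to control one of the four cross terms in the bound on $|f|$. Without that asymmetry the argument does not close (the paper remarks that a symmetric version appears to fail absent an additional two-dimensional robustness hypothesis), and a naive peeling argument gives no handle on it.
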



We first construct the locally testable codes in Sec.~\ref{sec_CON-LTC}, and then prove that the codes have constant rate, linear distance and local testability in Sec.~\ref{sec_ProofLTC}.

\subsection{Construction of LTCs}\label{sec_CON-LTC}

First, we introduce the notations used throughout the section.
The balanced product graph is constructed using two bipartite graphs
$X_{\updownarrow} = (V_{0*}, V_{1*}, E_{\updownarrow})$ and $X_{\leftrightarrow} = (V_{*0}, V_{*1}, E_{\leftrightarrow})$.
We denote the vertices and edges of the balanced product graph by
$V_{00}, V_{10}, V_{01}, V_{11}, E_{*0}, E_{*1}, E_{0*}, E_{1*}$
as shown in Fig.~\ref{fig:hypergraph_BG}. 

The bipartite graphs $X_{\updownarrow}, X_{\leftrightarrow}$ are chosen to satisfy the conditions given in the following theorem.
This can be constructed using the known generalized zig-zag construction for 1-sided lossless expanders \cite{capalbo2002randomness}.

\begin{theorem}[]
  \label{thm:1-sided-lossless-expander-with-symmetry}
  For any $\epsilon > 0$ and interval $I_\updownarrow, 
    I_\leftrightarrow \subseteq (0,1)$,
  there exist parameters
    $(w_\downarrow, w_\uparrow), (w_\rightarrow, w_\leftarrow), 
    (c_\downarrow, \epsilon_\downarrow), (c_\rightarrow, \epsilon_\rightarrow)$,
  such that
  \begin{itemize}
      \item $w_\downarrow/w_\uparrow \in I_\updownarrow$,
      \item $w_\rightarrow/w_\leftarrow \in I_\leftrightarrow$,
      \item $w_\uparrow \epsilon_\rightarrow \le \epsilon$,
      \item $\epsilon_\rightarrow \le \epsilon$ (implied by the last line), 
      \item $\epsilon_\downarrow \le \epsilon$,
  \end{itemize}
  then for any $N \in \NN$,
  there exist group $G$ with $|G|>N$,
  and bipartite graphs $X_\updownarrow = (V_{0*}, V_{1*}, E_\updownarrow), X_\leftrightarrow = (V_{*0}, V_{*1}, E_\leftrightarrow)$
  such that 
  \begin{itemize}
      \item $X_\updownarrow$ is $(w_\downarrow, w_\uparrow)$-regular,
      \item $X_\leftrightarrow$ is $(w_\rightarrow, w_\leftarrow)$-regular,
      \item $X_\updownarrow$ is 1-sided $(c_\downarrow, \epsilon_\downarrow)$-lossless expander,
      \item $X_\leftrightarrow$ is 1-sided $(c_\rightarrow, \epsilon_\rightarrow)$-lossless expander,
      \item $X_\updownarrow, X_\leftrightarrow$ have free $G$-actions,
      \item $|V_{0*}| = \Theta(|G|)$ and $|V_{*0}| = \Theta(|G|)$.
  \end{itemize}
  
\end{theorem}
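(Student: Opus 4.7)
The plan is to reduce the statement to the explicit 1-sided lossless expander construction of Capalbo, Reingold, Vadhan and Wigderson \cite{capalbo2002randomness}, which produces, for any prescribed degrees and any target $\epsilon > 0$, an infinite family of bipartite lossless expanders that arise from a generalized zig-zag product over Cayley graphs and therefore inherit a free group action. The bulk of the work is bookkeeping: choose the construction parameters so that all six bullets hold simultaneously for one and the same group $G$.

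First I would fix integer degree pairs $(w_\downarrow, w_\uparrow)$ and $(w_\rightarrow, w_\leftarrow)$ with $w_\downarrow/w_\uparrow \in I_\updownarrow$ and $w_\rightarrow/w_\leftarrow \in I_\leftrightarrow$. Such pairs exist because each $I_\bullet$ is a nonempty open sub-interval of $(0,1)$ and the rationals are dense; if the CRVW construction only outputs degrees in a restricted arithmetic progression, one scales both sides of each ratio by the same integer or adjusts the inner Cayley expander to land in the allowed grid. Next I would fix the target expansion losses by setting $\epsilon_\downarrow \le \epsilon$ and $\epsilon_\rightarrow \le \epsilon/w_\uparrow$; the latter automatically gives $\epsilon_\rightarrow \le \epsilon$ (since $w_\uparrow \ge 1$) and secures the key inequality $w_\uparrow \epsilon_\rightarrow \le \epsilon$. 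The CRVW theorem then provides constants $c_\downarrow, c_\rightarrow > 0$ and infinite families of $(w_\downarrow, w_\uparrow)$- and $(w_\rightarrow, w_\leftarrow)$-regular bipartite graphs that are 1-sided $(c_\downarrow, \epsilon_\downarrow)$- and $(c_\rightarrow, \epsilon_\rightarrow)$-lossless expanders respectively.

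To extract the group symmetry I would trace through the generalized zig-zag recipe. The base object in \cite{capalbo2002randomness} is a bipartite Cayley graph on some group $G_0$, and every zig-zag step replaces $G_0$ by a semidirect product with a small inner group; the final graph is still a Cayley-type object over a group $G$, so right-multiplication gives a free $G$-action on each side with $|V_{0*}|,\,|V_{*0}| = \Theta(|G|)$. Running the construction with the same base group for both $X_\updownarrow$ and $X_\leftrightarrow$ (and varying only the generating sets to realize the two degree pairs) yields a common $G$ acting freely on both graphs. By choosing $|G_0|$ large, we can make $|G| > N$ for any prescribed $N$.

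The main obstacle is precisely this last step: ensuring that a \emph{single} group $G$ carries free actions on both graphs while the degree ratios, the expansion thresholds, and the smallness of $\epsilon_\downarrow, \epsilon_\rightarrow$ are all simultaneously controlled. The freedom to pick the inner Cayley expander independently of the outer skeleton is what makes this possible: once we have identified any $G$ for which the CRVW construction produces a graph with the required degrees and expansion, scaling the inner expander tunes the ratios $w_\downarrow/w_\uparrow$ and $w_\rightarrow/w_\leftarrow$ into the desired intervals without disturbing the underlying group. Verifying freeness of the $G$-action and the $\Theta(|G|)$ vertex counts then reduces to standard facts about right-regular actions on Cayley graphs and the block structure of the zig-zag product.
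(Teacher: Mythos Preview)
Your approach is essentially the same as the paper's: reduce to the CRVW construction (stated in the paper as Theorem~\ref{thm:1-sided-lossless-basic}), pick $T$ so that the degree ratio lands in the prescribed interval, and---crucially---construct $X_\updownarrow$ first so that $w_\uparrow$ is known before setting $\epsilon_\rightarrow \le \epsilon/w_\uparrow$ for $X_\leftrightarrow$. The paper's proof is just these two invocations of CRVW in sequence with exactly this choice of $\epsilon_0$; your proposal even goes further than the paper in flagging the need for a common acting group $G$, which the paper leaves implicit. One minor bookkeeping slip: in CRVW the degree $D$ is determined by $T$ and $\epsilon_0$ rather than freely chosen, so you cannot literally ``fix integer degree pairs'' up front---but since only the ratios $w_\downarrow/w_\uparrow = 1/T$ matter for the interval conditions, and $w_\uparrow$ is determined once $T_\updownarrow,\epsilon_\downarrow$ are set, the sequential order you and the paper both use resolves this without difficulty.
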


Notice the conditions on the parameters are not symmetric between $X_{\updownarrow}$ and $X_{\leftrightarrow}$. We discuss this in \ref{rem:skew}.


Now, we are ready to construct the code.
\begin{itemize}
    \item By Theorem~\ref{thm:1-sided-lossless-expander-with-symmetry}, there exists 
    $X_\updownarrow, X_\leftrightarrow$,
    such that the parameters satisfy
    \begin{itemize}
        \item $1-\frac{w_\downarrow}{w_\uparrow}-\frac{w_\rightarrow}{w_\leftarrow} \ge r$,
        \item $|G|=\Theta(|V_{0*}|)=\Theta(|V_{*0}|)$,
        \item $w_\uparrow \epsilon_\rightarrow, \epsilon_\rightarrow, \epsilon_\downarrow \le \epsilon < 1/16$,
    \end{itemize}
    
    \item 
    Next, we take the balanced product $X_\updownarrow \times_G X_\leftrightarrow$
    which gives a 3-term chain complex 
    $\FF_2^{V_{00}}
    \xrightarrow{\partial_2} \FF_2^{V_{10}} \oplus \FF_2^{V_{01}}
    \xrightarrow{\partial_1} \FF_2^{V_{11}}$,
    where $\partial_2(v_{00}) = (L(E_{*0})(v_{00}), L(E_{0*})(v_{00}))$ and $\partial_1((v_{10}, v_{01})) = L(E_{1*})(v_{10}) + L(E_{*1})(v_{01})$ are induced from the adjacency matrices of the balanced product graph.
    Detailed construction is provided in Appendix~\ref{app_Chain_Complex_BP}.
    \item Finally, we obtain a classical code,
    $C(H=\partial_2)$,
    by taking $\partial_2$ as the parity-check matrix,
    where $V_{00}$ are the bits and $V_{10} \cup V_{01}$ are the checks.
    Another way to say it is that $\partial_2$ is the adjacency matrix between $V_{00}$ and $V_{10} \cup V_{01}$.
\end{itemize}

Note that $V_{11}$ is not used explicitly in the code construction.
However, it will be used in the proof of constant soundness.


\subsection{Proof of Theorem~\ref{thm:LTC}}\label{sec_ProofLTC}

Now, we prove that the code, constructed in the previous subsection, has arbitrarily large code length, constant rate, linear distance, constant locality, and constant soundness.

\begin{proof}[Proof of arbitrarily large code length]
  The code length $n = |V_{00}| = |V_{0*}||V_{*0}|/|G| = \Theta(|G|)$.
  By Theorem~\ref{thm:1-sided-lossless-expander-with-symmetry},
  $|G|$ can be arbitrarily large.
  So the code length is arbitrarily large.
\end{proof}
 
\begin{proof}[Proof of constant rate]
There are $n = |V_{00}|$ bits, $m = |V_{10}|+|V_{01}|$ checks, so $k \ge n-m = |V_{00}|-|V_{10}|-|V_{01}|$. Because $X_\updownarrow, X_\leftrightarrow$ are regular bipartite graphs, we know the ratio $|V_{00}| : |V_{10}| : |V_{01}| : |V_{11}| = w_\uparrow w_\leftarrow : w_\downarrow w_\leftarrow
        : w_\uparrow w_\rightarrow : w_\downarrow w_\rightarrow$.
Therefore, the rate, $k/n \ge 1-\frac{w_\downarrow}{w_\uparrow}-\frac{w_\rightarrow}{w_\leftarrow} \ge r$.
\end{proof}
\begin{proof}[Proof of linear distance]
Let $c_2 \in \Ker \partial_2$, i.e. $\partial_2 c_2 = 0$. 
Then $L(E_{*0}) c_2 = 0$. This is sufficient to show linear distance.

From Corollary~\ref{cor:freely-implies-lossless} in Appendix~\ref{app_LTC_Proof}, $(V_{00}, V_{10}, E_{*0})$ is a 1-sided $(c_\downarrow/|V_{*0}/G|, \epsilon_\downarrow)$-lossless expander.
So, if $|c_2| < c_\downarrow/|V_{*0}/G| |V_{00}|$, then $|L(E_{*0}) c_2| > w_\downarrow (1-\epsilon_\downarrow) |c_2|$,
which implies $c_2 = 0$ from $L(E_{*0}) c_2 = 0$.
Therefore, the distance $\ge (c_\downarrow/|V_{*0}/G|) |V_{00}| = \Theta(|V_{00}|) = \Theta(n)$.
\end{proof}

\begin{proof}[Proof of locality]
From the code construction, each check in $V_{10}$ is connected to $w_\uparrow$ bits, and each check in $V_{01}$ is connected to $w_\leftarrow$ bits.
So the code has locality parameter $w = \max(w_\uparrow, w_\leftarrow) = \Theta(1)$.
\end{proof}
\begin{proof}[Proof of constant soundness]
By Lemma~\ref{lem:linear-local-testable-distance-implies-soundness} in Appendix~\ref{app_LTC_Proof} and     $n = \Theta(m)$, it is sufficient to show the chain complex $\mathcal{C}$ has linear locally testable distance (will be defined in Appendix~\ref{app_LTC_Proof}).
By Lemma~\ref{lem:local-minimal-implies-local-testability}, it is sufficient to show the chain complex $\mathcal{C}$ has linear locally minimal distance. By Theorem~\ref{lem:small-set-LTC} and Corollary~\ref{cor:linear-local-minimal-distance}, it is sufficient to check $\epsilon<1/16$ which holds by assumption.
\end{proof}


\section{Conclusion} \label{sec:conclusion}
\subsection{Summary}

In this work, we construct $c^3$-locally testable code using the balanced product \cite{breuckmann2021balanced} of two 1-sided lossless expander graphs \cite{capalbo2002randomness}. This solves the conjecture of the existence of $c^3$-locally testable code. 

\subsection{Discussion}

Here, we compare our construction with two recent papers on similar topics \cite{panteleev2021asymptotically} \cite{dinur2021locally}. 
The common feature of these constructions is that all of them use the same kind of two-dimensional graph. 
The key difference is how one obtains the code from the graph.
Here, we illustrate 3 different ways to obtain a chain complex from the two-dimensional graph.

Take the notation in Definition~\ref{def:left-right-cayley-graph} and Figure~\ref{fig:left-right-cayley-graph}.
Denote vertices $V = V_{00} \cup V_{10} \cup V_{01} \cup V_{11}$,
horizontal edges $E^= = E_{0*} \cup E_{1*}$, 
vertical edges $E^{||} = E_{*0} \cup E_{*1}$,
and faces $F$.

Now, we can compare the chain complex obtained through 3 different methods.
In \cite{panteleev2021asymptotically},
  the chain complex is
  $\FF_2^{E^=} \rightarrow \FF_2^{F} \oplus \FF_2^{V} \rightarrow \FF_2^{E^{||}}$.
In \cite{dinur2021locally},
  the chain complex is
  $\FF_2^{F} \rightarrow \FF_2^{E^{||}} \oplus \FF_2^{E^=} \rightarrow \FF_2^{V}$.
In this work,
  the chain complex is
  $\FF_2^{V_{00}} \rightarrow \FF_2^{V_{10}} \oplus \FF_2^{V_{01}} \rightarrow \FF_2^{V_{11}}$.
For simplicity, we didn't include the base code in the examples above.

\subsection{Future work}

One direction is to consider problems in coding theory similar to LTCs including quantum low-density parity-check codes.
The techniques developed in this work could be applied to these settings.

Another direction is to study the properties of the two-dimensional graphs. 
Given the usefulness of lossless expanders in networks \cite{arora1996line} and computational complexity \cite{ben1999short, alekhnovich2004pseudorandom, alekhnovich2001lower, buresh2003rank}
it may be interesting to look for settings where one needs both structural properties from the chain complex and the expander properties.

\bibliographystyle{IEEEtran}
\bibliography{references.bib}

\appendices

\section{Construction of LTCs}
\subsection{1-sided lossless expander}
\label{app_1-sided-lossless}

In this section, we show Theorem~\ref{thm:1-sided-lossless-expander-with-symmetry}.

It is known that 1-sided lossless expanders exist for arbitrarily small $\epsilon$ through random constructions \cite{hoory2006expander}. However, because we need the graph to have symmetry, we use the explicit construction by \cite{capalbo2002randomness}, where they apply a generalized zig-zag product on Cayley eigenvalue expander graphs. Here is their result. Note that we include additional statements that wasn't stated explicitly in their work.

\begin{theorem}[(modified) 1-sided lossless expander] \label{thm:1-sided-lossless-basic}
  For every $N, T \le N, \epsilon_0 > 0$, 
    there exists a $(D, DT)$-regular bipartite graph $(V_0, V_1, E)$
    with free $G$ action
    which is a 1-sided $(K_{\max}/|V_0|, \epsilon_0)$-lossless expander
    such that,
  \begin{itemize}
    \item $|V_0| > N$,
    \item $D = (\frac{\log T + 1}{\epsilon_0})^{O(1)}$,
    \item $K_{\max} = O(\epsilon_0 |V_0|/T)$,
    \item $|V_0|/|G| = O_{T, \epsilon_0}(1)$,
  \end{itemize}
  where $O_{T, \epsilon_0}(1)$ means it is a constant that depends only on $T$ and $\epsilon_0$.
\end{theorem}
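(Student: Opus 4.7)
The plan is to take the explicit lossless-expander construction of Capalbo, Reingold, Vadhan and Wigderson \cite{capalbo2002randomness} essentially off the shelf, and to supply the two properties that their paper does not spell out: freeness of the $G$-action and the orbit bound $|V_0|/|G|=O_{T,\epsilon_0}(1)$. Their construction starts from a Cayley expander on a group $G$ of size exceeding $N$, and iteratively applies a generalized, unbalanced zig-zag product with constant-sized inner gadgets until the small-side vertex expansion reaches $(1-\epsilon_0)D$. Since $G$ can be drawn from any family of Cayley expanders (for instance $\mathrm{SL}_2(\FF_p)$ for growing $p$), the size condition $|V_0|>N$ is immediate. The quantitative bounds $D=((\log T+1)/\epsilon_0)^{O(1)}$ and $K_{\max}=O(\epsilon_0|V_0|/T)$ in the unbalanced regime with imbalance factor $T$ are exactly those that fall out of their analysis, and I would quote them directly rather than re-derive them.

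To exhibit the free $G$-action, I would read it off from the construction as follows. The initial Cayley graph carries a free right $G$-action, since right multiplication in a group is always free. In every subsequent zig-zag iteration both vertex sides have the form $G\times C_i$ for a constant-sized cloud set $C_i$, and the edges introduced at each step depend only on the Cayley labels in $G$ (which are translation-invariant under right multiplication) together with the inner-gadget structure on the cloud coordinate in $C_i$. Hence right multiplication by $G$ on the first coordinate commutes with every step and preserves the final edge set; the action is free because it is free on the first coordinate. Once freeness is in hand, the orbit count $|V_0|/|G|$ equals $|C_{\text{final}}|$, and $|C_{\text{final}}|$ is determined entirely by the termination point of the iteration, which depends only on $T$ and $\epsilon_0$. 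This yields the last bullet.

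The main obstacle will be bookkeeping: opening up the iterated unbalanced zig-zag construction far enough to verify that every combinatorial operation (edge replacement, the derandomized zig-zag product, and the unbalancing trick used to achieve the asymmetric degree profile $(D,DT)$) commutes with the right $G$-action, and that the intermediate vertex sets always remain of the product form $G\times C_i$. No single step is deep, but the interplay between unbalancing and cloud indexing is notationally heavy. The quantitative parameter bounds themselves need no new idea beyond those already in \cite{capalbo2002randomness}, so the genuine content of this appendix is really confined to this symmetry-tracking argument.
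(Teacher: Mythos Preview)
Your proposal is correct and matches the paper's treatment. The paper does not actually give a proof of this theorem: it simply cites \cite{capalbo2002randomness} for the quantitative bounds and remarks that the free $G$-action and the bound $|V_0|/|G|=O_{T,\epsilon_0}(1)$ are additional statements not made explicit there, which is precisely the content you propose to supply by tracking the right $G$-action through the iterated zig-zag construction on the underlying Cayley graph.
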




Now, we use this result to show Theorem~\ref{thm:1-sided-lossless-expander-with-symmetry}.

\begin{proof} [Proof of Theorem~\ref{thm:1-sided-lossless-expander-with-symmetry}]
  We find $X_{\updownarrow}$ and $X_{\leftrightarrow}$ in order.
  Namely, we first find $X_{\updownarrow}$ that satisfies
  $w_\downarrow/w_\uparrow \in (\mu_\updownarrow, \mu_\updownarrow')$ and $\epsilon_\downarrow \le \epsilon$.
  Then, we find $X_{\leftrightarrow}$ that satisfies
  $w_\rightarrow/w_\leftarrow \in (\mu_\leftrightarrow, \mu_\leftrightarrow')$ and $\epsilon_\rightarrow \le \epsilon/w_\uparrow$.
  Because $w_\uparrow \ge 1$, $\epsilon_\rightarrow \le \epsilon/w_\uparrow$ implies $\epsilon_\rightarrow \le \epsilon$.

  This can be achieved by the theorem. 
  For $X_{\updownarrow}$, we set $T = w_\uparrow/w_\downarrow \in (\mu_\updownarrow, \mu_\updownarrow'), \epsilon_0 = \epsilon$. 
  For $X_{\leftrightarrow}$, we set $T = w_\leftarrow/w_\rightarrow \in (\mu_\leftrightarrow, \mu_\leftrightarrow'), \epsilon_0 = \epsilon/w_\uparrow$.
  One can see that the desired properties are implied from Theorem~\ref{thm:1-sided-lossless-basic}.
\end{proof}


\subsection{Chain complex from balanced product}
\label{app_Chain_Complex_BP}



In this section, we first provide more detail on the construction of the chain complex, then we prove the chain complex is well defined.

To get the code, we need to obtain vector spaces and linear maps from the graph. We do so by taking the adjacency matrix. 
Recall the adjacency matrix of a bipartite graph
  is a linear map
  $L((V_0, V_1, E)): \FF_2^{V_0} \rightarrow \FF_2^{V_1}$.

Before going further, we introduce a convenient notation.
Because of the one-one correspondence between
  the subsets of $V$ and vectors in $\FF_2^{V}$
  by the map $\v \mapsto \sum_{x \in \v} e_x$
    where $\v \subseteq V$,
  we abuse the notation by overloading both use cases.
For example, we write both $x \in \v$ and $\v \in \FF_2^{V}$,
  where the first $\v$ is interpreted as a set with $x$ as its element
  and the second $\v$ is interpreted as a vector $\v \in \FF_2^{V}$.
Similarly, $x_0$ can be interpreted both as an element in $V_0$
  and the basis vector $e_{x_0}$.
  
Now, we generalize the construction of linear maps from bipartite graphs to
  balanced product of bipartite graphs.
Given a balanced product graph $(V^*, E^*, F)$,
  we obtain the vector spaces 
  $\FF_2^{V_{00}}, \FF_2^{V_{10}}, \FF_2^{V_{01}}, \FF_2^{V_{11}}$
  and the linear maps
  \begin{itemize}
    \item $L_{00 \rightarrow 10}=L(E_{*0}): \FF_2^{V_{00}} \rightarrow \FF_2^{V_{10}}$,
    \item $L_{00 \rightarrow 01}=L(E_{0*}): \FF_2^{V_{00}} \rightarrow \FF_2^{V_{01}}$,
    \item $L_{10 \rightarrow 11}=L(E_{1*}): \FF_2^{V_{10}} \rightarrow \FF_2^{V_{11}}$,
    \item $L_{01 \rightarrow 11}=L(E_{*1}): \FF_2^{V_{01}} \rightarrow \FF_2^{V_{11}}$,
  \end{itemize}
  where
  $L_{00 \rightarrow 10}(x_{00}) = \sum_{x_{10} \in N_{10}(x_{00})} x_{10}$,
  $L_{00 \rightarrow 01}(x_{00}) = \sum_{x_{01} \in N_{01}(x_{00})} x_{01}$,
  $L_{10 \rightarrow 11}(x_{10}) = \sum_{x_{11} \in N_{11}(x_{10})} x_{11}$,
  $L_{01 \rightarrow 11}(x_{01}) = \sum_{x_{11} \in N_{11}(x_{01})} x_{11}$.

These 4 linear maps form a chain complex.
  $\FF_2^{V_{00}}
    \xrightarrow{\partial_2} \FF_2^{V_{10}} \oplus \FF_2^{V_{01}}
    \xrightarrow{\partial_1} \FF_2^{V_{11}}$,
  where
  \begin{equation}
    \partial_2(v_{00}) = (L_{00 \rightarrow 10}(v_{00}), L_{00 \rightarrow 01}(v_{00})),
  \end{equation}
  \begin{equation}
    \partial_1((v_{10}, v_{01})) = L_{10 \rightarrow 11}(v_{10}) + L_{01 \rightarrow 11}(v_{01}).
  \end{equation}
For the chain complex to be well defined,
  we need to check the condition $\partial_1 \partial_2 = 0$,
  which is same as showing 
  $L_{10 \rightarrow 11} L_{00 \rightarrow 10}(v) + L_{00 \rightarrow 01} L_{01 \rightarrow 11}(v) = 0$

The proof consists of 2 steps. We first use the free $G$ action to label the vertices and the edges of $X, Y$.
Then using the labeling, we express the linear maps explicitly and show the equality.

First, we show a set $V$, with free $G$ action satisfies $V = G \times R$,
  with the $G$ action acting on the component $G$.
For each orbit $G x$, we pick a representative element $r$.
Let $R$ be the set of the representative elements for all orbits.
Then each element $x \in V$ can be written uniquely as $g r$ for some $g \in G, r \in R$.
This induces the bijection between $V$ and $G \times R$.
Furthermore $g' (g r) = (g'g) r$, which implies
  the $G$-action on $V$ is the same $G$-action on the component $G$ of $G \times R$. 

Next, we show the edges can be decomposed into Cayley graphs.
Given a bipartite graph $(V_0, V_1, E)$ with left free $G$-action, 
  we can label the vertices by
  $V_0 = \{(g_0, r_0) : g_0 \in G, r_0 \in R_0\}$
  and $V_1 = \{(g_1, r_1) : g_1 \in G, r_1 \in R_1\}$.
Now, consider the subgraph $X_{r_0, r_1}$ between the vertices
  $\{(g_0, r_0) : g_0 \in G\}$ and $\{(g_1, r_1) : g_1 \in G\}$
  for each $r_0 \in R_0$ and $r_1 \in R_1$.
It is clear that $X = \cup_{r_0 \in R_0, r_1 \in R_1} X_{r_0, r_1}$.
Now, we show each $X_{r_0, r_1}$ is a right bipartite Cayley graph.
  Let $A_{r_0, r_1} = \{g_0^{-1}g_1 : ((g_0, r_0), (g_1, r_1)) \in E\}$.
  Because the graph is left-$G$ invariant, 
    $((g_0, r_0), (g_1, r_1)) \in X_{r_0, r_1} \Rightarrow
    ((gg_0, r_0), (gg_1, r_1)) \in X_{r_0, r_1}$.
  Therefore, $X_{r_0, r_1} = \Gamma_{\textnormal{right}}(G, A_{r_0, r_1})$.

Finally, we will use the labels above to express the vector spaces and the linear maps.
Consider the balanced product graph $X \times_G Y$,
  where $X = (G \times R_0, G \times R_1, 
    \cup_{r_0 \in R_0, r_1 \in R_1} X_{r_0, r_1})$,
    $X_{r_0, r_1} = \Gamma_{\textnormal{right}}(G, A_{r_0, r_1})$
    and $Y = (G \times S_0, G \times S_1, 
    \cup_{s_0 \in S_0, s_1 \in S_1} Y_{s_0, s_1})$,
    $Y_{s_0, s_1} = \Gamma_{\textnormal{right}}(G, B_{s_0, s_1})$.

For the vector spaces, for each $\a, \b \in \{0,1\}$ we have
\begin{equation}
  V_{\a\b} = \{(g, r_\a, g', s_\b): g, g' \in G, r_\a \in R_\a, s_\b \in S_\b\} / \sim
\end{equation}
where $(g, r_\a, g', s_\b) \sim (g''g, r_\a, g''g', s_\b)$.
This implies an alternative labeling
  $V_{\a\b} = \{(h, r_\a, s_\b): h \in G, r_\a \in R_\a, s_\b \in S_\b\}$,
  where $(g, r_\a, g', s_\b) \mapsto (g^{-1}g', r_\a, s_\b)$.

For the linear maps, by definition, we have
\begin{equation}
  L_{00 \rightarrow 10}((h, r_0, s_0))
  = \sum_{r_1 \in R_1}\sum_{a \in A_{r_0, r_1}} (a^{-1}h, r_1, s_0),
\end{equation}
\begin{equation}
  L_{00 \rightarrow 01}((h, r_0, s_0))
  = \sum_{s_1 \in S_1}\sum_{b \in B_{s_0, s_1}} (hb, r_0, s_1),
\end{equation}
\begin{equation}
  L_{01 \rightarrow 11}((h, r_0, s_1))
  = \sum_{r_1 \in R_1}\sum_{a \in A_{r_0, r_1}} (a^{-1}h, r_1, s_1),
\end{equation}
\begin{equation}
  L_{10 \rightarrow 11}((h, r_1, s_0))
  = \sum_{s_1 \in S_1}\sum_{b \in B_{s_0, s_1}} (hb, r_1, s_1).
\end{equation}

So,
\begin{align}
  & \partial_1 \partial_2 ((h, r_0, s_0))\\
  &= L_{10 \rightarrow 11} L_{00 \rightarrow 10}((h, r_0, s_0)) + L_{01 \rightarrow 11} L_{00 \rightarrow 01}((h, r_0, s_0)) \\
  &= \sum_{s_1 \in S_1}\sum_{b \in B_{s_0, s_1}}\sum_{r_1 \in R_1}\sum_{a \in A_{r_0, r_1}} (a^{-1}hb, r_1, s_1) \\
  &+ \sum_{r_1 \in R_1}\sum_{a \in A_{r_0, r_1}}\sum_{s_1 \in S_1}\sum_{b \in B_{s_0, s_1}} (a^{-1}hb, r_1, s_1) \\
  &= 0
\end{align}




\section{Technical lemmas for  Theorem~\ref{thm:LTC}}
\label{app_LTC_Proof}
In this section, we provide relevant materials for the proof of main theorem, Theorem~\ref{thm:LTC}.

We first prove a simple lemma that shows the one-dimensional subgraph of the balanced product graph remains a lossless expander.
Then we introduce the locally testable distance and the locally minimal distance.

\begin{lemma} \label{lem:freely-implies-copy}
Given two graphs $X_{\updownarrow} = (V_{0*}, V_{1*}, E_{\updownarrow})$ and $X_{\leftrightarrow} = (V_{*0}, V_{*1}, E_{\leftrightarrow})$ with free $G$-invariant action. Then for the one-dimensional subgraph of $X_{\updownarrow} \times_G X_{\leftrightarrow}$,
$(V_{00}, V_{10}, E_{*0})$ is isomorphic to $|V_{*0}/G|$ copies of $X_{\updownarrow}$. Similarly,  
$(V_{01}, V_{11}, E_{*1})$ is isomorphic to $|V_{*1}/G|$ copies of $X_{\updownarrow}$,
$(V_{00}, V_{01}, E_{0*})$ is isomorphic to $|V_{0*}/G|$ copies of $X_{\leftrightarrow}$, and 
$(V_{10}, V_{11}, E_{1*})$ is isomorphic to $|V_{1*}/G|$ copies of $X_{\leftrightarrow}$.
\end{lemma}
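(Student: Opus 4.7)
The plan is to verify the claim for $(V_{00}, V_{10}, E_{*0})$; the other three 1D subgraphs are treated identically (with the roles of $X_\updownarrow, X_\leftrightarrow$ and of the two sides of each bipartition swapped as appropriate). The approach is to analyze the subgraph first at the level of the hypergraph product $X_\updownarrow \times X_\leftrightarrow$, where the structure is transparent, and then track what happens when we pass to the quotient by the diagonal $G$-action.

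Before quotienting, the subgraph of the hypergraph product on $V_{X,0}\times V_{Y,0}$ and $V_{X,1}\times V_{Y,0}$ with edge set $\{((x_0,y_0),(x_1,y_0)) : (x_0,x_1)\in E_\updownarrow,\ y_0\in V_{*0}\}$ is, by inspection of the definition, a disjoint union indexed by $y_0 \in V_{*0}$, where the slice over a fixed $y_0$ is an isomorphic copy of $X_\updownarrow$. So there are exactly $|V_{*0}|$ disjoint copies of $X_\updownarrow$.

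Next I would pass to the balanced product. The diagonal $G$-action identifies the slice over $y_0$ with the slice over $gy_0$ via $(x_\alpha, y_0)\mapsto (gx_\alpha, gy_0)$. Because $G$ acts on $X_\updownarrow$ as a graph automorphism, this map is a graph isomorphism between the two slices; so slices in the same $G$-orbit on $V_{*0}$ are all glued together to a single copy of $X_\updownarrow$. Freeness of the action on $V_{*0}$ means each orbit has size exactly $|G|$, so the number of orbits is $|V_{*0}/G|$. Freeness of the action on $V_{0*}$ and $V_{1*}$ ensures that the gluing maps do not fold a slice onto itself nontrivially, so each glued component is a single faithful copy of $X_\updownarrow$ rather than a quotient of it. Picking one representative $y_0$ per orbit and checking that the induced bijection on vertices and edges respects adjacency finishes the isomorphism.

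The routine but only slightly subtle step is the last one: confirming that the identifications are consistent (well-defined) as a graph isomorphism, for which the freeness hypothesis on both $V_{0*}$ and $V_{1*}$ is exactly what is needed. I don't expect a genuine obstacle; the result is essentially bookkeeping once the hypergraph-product picture is in place. The analogous arguments for $(V_{01},V_{11},E_{*1})$, $(V_{00},V_{01},E_{0*})$, and $(V_{10},V_{11},E_{1*})$ follow by relabeling the coordinates, giving $|V_{*1}/G|$, $|V_{0*}/G|$, and $|V_{1*}/G|$ copies of the respective factor graphs.
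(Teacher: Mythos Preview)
Your proposal is correct and follows essentially the same approach as the paper: analyze the 1D subgraph first in the hypergraph product (where it is $|V_{*0}|$ disjoint copies of $X_\updownarrow$ indexed by $V_{*0}$), then observe that the diagonal $G$-quotient identifies copies within the same $G$-orbit of $V_{*0}$, leaving $|V_{*0}/G|$ copies. Your write-up is in fact more careful than the paper's about how freeness on both sides is used to ensure each orbit of slices collapses to a single faithful copy rather than a proper quotient.
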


\begin{proof}
  Here, we show the case for $(V_{00}, V_{10}, E_{*0})$.
    Other cases follow similarly.
  Before quotienting,
    the 1d subgraph $(V'_{00}, V'_{10}, E'_{*0})$ of
    the hypergraph product $X_{\updownarrow} \times X_{\leftrightarrow}$
    is isomorphic to $|V_{*0}|$ copies of $X_{\updownarrow}$
    each labeled by $V_{*0}$.
  After quotienting,
    the copies in the same orbit of $V_{*0}$
    are identified into 1 copy of $X_{\updownarrow}$.
    So we are left with $|V_{*0}/G|$ copies of $X_{\updownarrow}$.
\end{proof}

\begin{corollary} \label{cor:freely-implies-lossless}
  Under the same assumption in \ref{lem:freely-implies-copy}.
  Futhermore, $X_{\updownarrow}$ is a 1-sided $(c_\downarrow, \epsilon_\downarrow)$-lossless expander, $X_{\leftrightarrow}$ is a 1-sided $(c_\rightarrow, \epsilon_\rightarrow)$-lossless expander.
  Then $(V_{00}, V_{10}, E_{*0})$ is a 1-sided $(c_\downarrow V_{0*}/V_{00}, \epsilon_\downarrow)$-lossless expander, 
  $(V_{01}, V_{11}, E_{*1})$ is a 1-sided $(c_\downarrow V_{0*}/V_{01}, \epsilon_\downarrow)$-lossless expander,
  $(V_{00}, V_{01}, E_{0*})$ is a 1-sided $(c_\rightarrow V_{*0}/V_{00}, \epsilon_\rightarrow)$-lossless expander,
  $(V_{10}, V_{11}, E_{1*})$ is a 1-sided $(c_\rightarrow V_{*0}/V_{10}, \epsilon_\rightarrow)$-lossless expander.
\end{corollary}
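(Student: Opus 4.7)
The plan is to reduce the corollary directly to the lossless expander hypotheses on $X_\updownarrow$ and $X_\leftrightarrow$ by exploiting the decomposition from Lemma~\ref{lem:freely-implies-copy}. I will treat the case of $(V_{00}, V_{10}, E_{*0})$ in detail; the other three cases are completely analogous by symmetry (swapping the roles of $X_\updownarrow \leftrightarrow X_\leftrightarrow$ and of the two sides).

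First, I invoke Lemma~\ref{lem:freely-implies-copy} to write $(V_{00}, V_{10}, E_{*0})$ as a disjoint union of $|V_{*0}/G|$ isomorphic copies of $X_\updownarrow$. Label these copies by $i \in I$ with $|I| = |V_{*0}/G|$; within the $i$-th copy the left side has $|V_{0*}|$ vertices and is isomorphic to $V_{0*}$ under an isomorphism I denote $\phi_i$. A key point that I would spell out is that because the copies are vertex-disjoint (on both sides) and their edge sets are disjoint, neighborhoods in different copies are also disjoint in $V_{10}$.

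Now take any $\nu \subseteq V_{00}$ with $|\nu| < (c_\downarrow |V_{0*}|/|V_{00}|)\cdot|V_{00}| = c_\downarrow |V_{0*}|$. Decompose $\nu = \bigsqcup_{i \in I} \nu_i$ where $\nu_i$ lies in the $i$-th copy. Then $|\phi_i^{-1}(\nu_i)| = |\nu_i| \le |\nu| < c_\downarrow |V_{0*}|$, so the 1-sided $(c_\downarrow,\epsilon_\downarrow)$-lossless expansion hypothesis on $X_\updownarrow$ applies to each $\phi_i^{-1}(\nu_i)$ and yields
\begin{equation}
  |N_{V_{10}}(\nu_i)| \;\ge\; (1-\epsilon_\downarrow)\, w_\downarrow\, |\nu_i|.
\end{equation}
Summing over $i$ and using disjointness of the $N_{V_{10}}(\nu_i)$ gives $|N_{V_{10}}(\nu)| \ge (1-\epsilon_\downarrow)w_\downarrow |\nu|$, which is exactly the $(c_\downarrow|V_{0*}|/|V_{00}|,\epsilon_\downarrow)$-lossless expansion claim.

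There is no real obstacle here; the statement is essentially a repackaging of Lemma~\ref{lem:freely-implies-copy} together with the definition of small-set vertex expansion. The only point requiring mild care is bookkeeping: verifying that the threshold $|\nu| < c_\downarrow|V_{0*}|$ on the \emph{total} size automatically ensures the per-copy subsets $\nu_i$ satisfy the per-copy threshold (which it does because each $|\nu_i| \le |\nu|$), and confirming that disjoint copies give disjoint neighborhoods so that the bounds add rather than being diluted. The remaining three cases are obtained by the same argument applied, respectively, to the decompositions $(V_{01},V_{11},E_{*1}) \cong |V_{*1}/G|$ copies of $X_\updownarrow$, $(V_{00},V_{01},E_{0*}) \cong |V_{0*}/G|$ copies of $X_\leftrightarrow$, and $(V_{10},V_{11},E_{1*}) \cong |V_{1*}/G|$ copies of $X_\leftrightarrow$ furnished by Lemma~\ref{lem:freely-implies-copy}.
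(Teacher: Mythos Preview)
Your proposal is correct and follows essentially the same argument as the paper: decompose the subgraph into $|V_{*0}/G|$ disjoint copies of $X_\updownarrow$ via Lemma~\ref{lem:freely-implies-copy}, split a small set $\nu$ into its per-copy pieces (each automatically below the threshold), apply the lossless expansion of $X_\updownarrow$ in each copy, and sum using the disjointness of neighborhoods across copies. If anything, you are slightly more careful than the paper in explicitly noting the disjointness of the neighborhoods and the per-copy threshold check.
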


\begin{proof}
  Here, we show the case for $(V_{00}, V_{10}, E_{*0})$.
    Other cases follow similarly.
  By definition, any small set $v_{0*} \subseteq V_{0*}$, $|v_{0*}| < c_\downarrow |V_{0*}|$,
    satisfies $|N_{V_{1*}}(v_{0*})| \ge (1-\epsilon_\downarrow) w_\downarrow |v_{0*}|$.

  Now, given a small set $v_{00} \subseteq V_{00}$, $|v_{00}| < c_\downarrow |V_{0*}|$.
    Let $v_{00} = \cup_{i=1}^{|V_{*0}/G|} v_{00, i}$, where $v_{00, i}$ is the intersection of $v_{00}$ with the $i$-th copy of $X_{\updownarrow}$.
  Because each $v_{00, i}$ is small, $|v_{00, i}| < c_\downarrow |V_{0*}|$,
    the size of its neighbor $|N_{V_{10}}(v_{00, i})|$ has size at least $(1-\epsilon_\downarrow)|v_{00, i}|$
  Therefore, $|N_{V_{10}}(v_{00})| = \sum_{i=1}^{|V_{*0}/G|} |N_{V_{10}}(v_{00, i})|
    \ge \sum_{i=1}^{|V_{*0}/G|} (1-\epsilon_\downarrow)|v_{00, i}| = (1-\epsilon_\downarrow)|v_{00}|$.
  This implies $(V_{00}, V_{10}, E_{*0})$ is a 1-sided $(c_\downarrow V_{0*}/V_{00}, \epsilon_\downarrow)$-lossless expander.
\end{proof}

\subsection{Locally Testable Distance}



Locallly testable distance is a parameter that is related to soundness of the LTC. If a code has linear locally testable distance, then the code has constant soundness.


\begin{definition}[Locally testable distance]
Given is a chain complex $\mathcal{C}: C_{i+1} \xrightarrow{\partial_{i+1}} C_{i}$. The locally testable distance $d_i^{LT}(\mathcal{C})$ is the maximal value such that for any short vector $c_i \in \Ima \partial_{i+1}$ with $|c_i| < d_i^{LT}(\mathcal{C})$, there exists a short vector $c_{i+1} \in C_{i+1}$ such that $c_i = \partial_{i+1} c_{i+1}$ and $|c_{i+1}| = O(|c_i|)$.
\end{definition}


\begin{lemma}[Linear locally testable distance implies constant soundness]
  \label{lem:linear-local-testable-distance-implies-soundness}
  Given a chain complex 
    $C_2 \xrightarrow{\partial_2} C_1$.
  The code $C=C(H)$ defined by taking $H = \partial_2$ as the parity check matrix,
    where $\FF_2^n = C_2$ are the bits and $\FF_2^m = C_1$ are the checks,
    satisfies 
    \begin{equation}
      \frac{1}{m} |Hx| \ge \frac{s}{n} d(x, C(H)),
    \end{equation}
    where $s =\min(O(\frac{n}{m}), \frac{d_1^\text{LT}}{m})$.
  
  Therefore, if $d_1^\text{LT} = \Theta(m) = \Theta(n)$,
    then $s = \Theta(1)$, i.e. constant soundness.
\end{lemma}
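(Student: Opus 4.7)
The plan is to reformulate the soundness inequality as a statement about short preimages of $\partial_2$, and then split into two cases according to the weight of the syndrome.

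First I would observe that if $c_1 = Hx \in \Ima \partial_2$ and I can produce any $c_2 \in C_2$ with $\partial_2 c_2 = c_1$, then $x - c_2 \in \Ker \partial_2 = C(H)$, so $d(x, C(H)) \le |x - (x - c_2)| = |c_2|$. Hence it is enough to show: for every $c_1 \in \Ima \partial_2$ there exists $c_2 \in C_2$ with $\partial_2 c_2 = c_1$ and $\tfrac{1}{m}|c_1| \ge \tfrac{s}{n}|c_2|$. This reduction is the main conceptual move; once in place, the argument collapses to two bounds on $|c_2|$ in terms of $|c_1|$.

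Second, I would case-split on how $|c_1|$ compares to $d_1^{LT}$. In the short syndrome case $|c_1| < d_1^{LT}$, the definition of locally testable distance directly produces a preimage $c_2$ with $|c_2| \le K|c_1|$ for some absolute constant $K = O(1)$, so $\tfrac{1}{m}|c_1| \ge \tfrac{1}{Km}|c_2|$, which matches $\tfrac{s}{n}|c_2|$ provided $s \le n/(Km) = O(n/m)$. In the long syndrome case $|c_1| \ge d_1^{LT}$, since $c_1 \in \Ima \partial_2$ any preimage $c_2 \in \FF_2^n$ trivially has $|c_2| \le n$, so $\tfrac{s}{n}|c_2| \le s$ and it is enough to require $s \le |c_1|/m$, which holds as soon as $s \le d_1^{LT}/m$. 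Taking $s = \min(O(n/m),\, d_1^{LT}/m)$ handles both cases simultaneously.

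The concluding ``Therefore'' sentence then drops out for free: when $d_1^{LT} = \Theta(m) = \Theta(n)$, both $n/m = \Theta(1)$ and $d_1^{LT}/m = \Theta(1)$, so $s = \Theta(1)$. I do not anticipate any real obstacle; this is essentially bookkeeping once the reduction to preimages is made. The only step doing nontrivial work is the short-syndrome case, where the linear dependence $|c_2| = O(|c_1|)$ in the definition of locally testable distance is precisely what turns a code-length bound $|c_2| \le n$ into a dimension-free constant $K$, and hence produces a soundness $s$ independent of the code size.
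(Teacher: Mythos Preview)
Your proposal is correct and follows essentially the same approach as the paper: reduce to finding, for each $c_1 \in \Ima \partial_2$, a preimage $c_2$ with $\tfrac{1}{m}|c_1| \ge \tfrac{s}{n}|c_2|$, then case-split on whether $|c_1| < d_1^{\text{LT}}$ (use the definition to get $|c_2| = O(|c_1|)$) or $|c_1| \ge d_1^{\text{LT}}$ (use the trivial bound $|c_2| \le n$). Your version is slightly more explicit in justifying why the preimage reformulation is equivalent to the soundness inequality, but otherwise the two arguments are the same.
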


\begin{proof}
  Equivalently, we show for all $c_1 \in \Ima \partial_2$,
    there exists $c_2 \in C_2$,
    such that $c_1 = \partial_2 c_2$ and 
    $\frac{1}{m} |c_1| \ge \frac{s}{n} |c_2|$.
  We split into 2 cases.
  If $|c_1| < d_1^\text{LT}$, 
    by the definition of locally testable distance,
    there exists $c_2$ with $|c_2| = O(|c_1|)$.
    In this case, $\frac{1}{m} |c_1| \ge \frac{O(n/m)}{n} |c_2|$.
  Otherwise, $|c_1| \ge d_1^\text{LT}$. 
    Because $c_1 \in \Ima \partial_2$,
    there exists $c_2$ with $|c_2| \le n$.
    In this case, $\frac{1}{m} |c_1| \ge \frac{d_1^\text{LT}/m}{n} |c_2|$.
  Overall, $\frac{1}{m} |c_1| \ge \frac{s}{n} |c_2|$
  with $s = \min(O(\frac{n}{m}), \frac{d_1^\text{LT}}{m})$.
\end{proof}

\subsection{Locally minimal} \label{sec:local-minimal}

In this section, we introduce the locally minimal distance
  \cite{evra2020decodable}
  \cite{kaufman2014ramanujan}
  \cite{panteleev2021asymptotically},
  and show that locally minimal distance is a lower bound of
  locally testable distance.
This means that if we can show locally minimal distance is linear
  then locally testable distance is linear which implies constant soundness.

\begin{definition}[Locally minimal]
  Given a chain complex 
    $C_{i+1} \xrightarrow{\partial_{i+1}} C_{i}$.
  A vector $c_i \in C_i$ is locally minimal if 
    for any basis vector $e_{i+1} \in C_{i+1}$
    \begin{equation}
      |c_i + \partial_{i+1} e_{i+1}| \ge |c_i|.
    \end{equation}
\end{definition}

The definition of locally minimal is related to the greedy flipping decoder of the expander code.

\begin{definition}[Greedy flipping algorithm]
  Input: $c_i \in C_i$.
  \begin{enumerate}
    \item If there exists a basis vector $e_{i+1} \in C_{i+1}$,
      such that $|c_i + \partial_{i+1} e_{i+1}| < |c_i|$,
      replace $c_i$ with $c_i + \partial_{i+1} e_{i+1}$. 
    \item Repeat until no such $e_{i+1}$ exists. Output $c_i$.
  \end{enumerate}
\end{definition}

Any output of a greedy flipping algorithm is locally minimal.
Note that $c_i$ strictly decreases in each iteration.
So the algorithm halts in $|c_i|$ steps.
Note that $\partial_{i+1} c_i$ does not change throughout the algorithm
  because the change of $\partial c_i$ is $\partial \partial e_i = 0$.
We refer the process of 
  replacing $c_i$ with $c_i + \partial_{i+1} e_{i+1}$ flipping, 
  because in $\FF_2$, the bits flip between 0 and 1.

In our context, we consider a modification, namely, the weighted locally minimal,
  where we normalize the weight
    before comparing $c_i + \partial_{i+1} e_{i+1}$ and $c_i$.
  The purpose of performing this additional normalization is to make the statement more natural.

The normalization is determined through the following discussion.
For a chain complex constructed from balanced product of regular bipartite graphs,
  $\partial e_2$ flips 
  $w_\downarrow$ bits in $\FF_2^{V_{10}}$
  and $w_\rightarrow$ bits in $\FF_2^{V_{01}}$.
So we will weight the components in $\FF_2^{V_{10}}$ with $1/w_\downarrow$,
  and the components in $\FF_2^{V_{01}}$ with $1/w_\rightarrow$.

\begin{definition}[Weighted locally minimal]
  Given a chain complex 
    $C_2 \xrightarrow{\partial_2} C_1 
    \xrightarrow{\partial_1} C_0$
    constructed from balanced product of regular bipartite graphs.
  A vector $c_1 = (v_{10}, v_{01}) \in \Ker \partial_1$ is weighted locally minimal if 
    for any basis vector $e_2 \in C_2$
    \begin{equation}
      |c_1 + \partial e_2|_w \ge |c_1|_w,
    \end{equation}
    where $|(v_{10}, v_{01}))|_w = |v_{10}|/w_\downarrow + |v_{01}|/w_\rightarrow$.
\end{definition}

The greedy flipping algorithm still applies.
Because in each step, $|c_1|_w$ strictly decreases by at least $1/\max(w_\downarrow, w_\rightarrow)$,
  the algorithm halts in $|c_1|_w \max(w_\downarrow, w_\rightarrow) \le |c_1| \max(w_\downarrow, w_\rightarrow)/\min(w_\downarrow, w_\rightarrow) = \Theta(|c_1|)$ steps.

From now on, we only consider the chain complex
  constructed from balanced product of regular bipartite graphs,
  and locally minimal always means weighted locally minimal.

Now, we define the locally minimal distance.

\begin{definition}[Locally minimal distance]
  Given a chain complex 
    $\mathcal{C}: C_2 \xrightarrow{\partial_2} C_1 
    \xrightarrow{\partial_1} C_0$.
  The locally minimal distance $d_1^{LM}(\mathcal{C})$ is 
    the smallest weight of the non trivial locally minimal vectors.
  Formally,
  \begin{equation}
    d_1^{LM}(\mathcal{C}) = \min_{c_1 \in \Ker (\partial_1), c_1\text{ is locally minimal}, c_1 \ne 0} |c_1|.
  \end{equation}
\end{definition}

Finally, we show locally minimal distance is a lower bound of
  locally testable distance.

\begin{lemma}[Linear locally minimal distance implies linear locally testable distance]
  \label{lem:local-minimal-implies-local-testability}
  Given a chain complex 
    $\mathcal{C}: C_2 \xrightarrow{\partial_2} C_1 
    \xrightarrow{\partial_1} C_0$.
  Then 
  \begin{equation}
    d_1^{LT}(\mathcal{C}) \ge d_1^{LM}(\mathcal{C}).
  \end{equation}
\end{lemma}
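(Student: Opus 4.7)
The plan is to apply the weighted greedy flipping algorithm directly to the syndrome $c_1$, and to read off a small preimage $c_2$ from the accumulated flips. Concretely, given $c_1 \in \Ima \partial_2$ with $|c_1|$ below (a constant multiple of) $d_1^{LM}(\mathcal{C})$, I will produce $c_2 \in C_2$ with $\partial_2 c_2 = c_1$ and $|c_2| = O(|c_1|)$, which is exactly what the definition of $d_1^{LT}(\mathcal{C})$ requires.

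First I would run the weighted greedy flipping algorithm on $c_1$, obtaining a sequence $c_1 = c_1^{(0)}, c_1^{(1)}, \dots, c_1^{(T)} = c_1^\star$ where at each step $c_1^{(t+1)} = c_1^{(t)} + \partial_2 e_2^{(t)}$ for some basis vector $e_2^{(t)} \in C_2$ that strictly decreases the weighted norm. By construction the output $c_1^\star$ is (weighted) locally minimal, and setting $c_2 := \sum_t e_2^{(t)}$ gives $c_1^\star = c_1 + \partial_2 c_2$ with $|c_2| \le T$.

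Next I would bound $T$. Each flip decreases $|\,\cdot\,|_w$ by at least $1/\max(w_\downarrow, w_\rightarrow)$, and the weighted and unweighted norms satisfy $|\,\cdot\,|/\max(w_\downarrow,w_\rightarrow) \le |\,\cdot\,|_w \le |\,\cdot\,|/\min(w_\downarrow,w_\rightarrow)$. Hence
\begin{equation}
T \;\le\; |c_1|_w \cdot \max(w_\downarrow,w_\rightarrow) \;\le\; \frac{\max(w_\downarrow,w_\rightarrow)}{\min(w_\downarrow,w_\rightarrow)}\,|c_1| \;=\; O(|c_1|),
\end{equation}
so $|c_2| = O(|c_1|)$, uniformly in $c_1$.

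The remaining step is to show $c_1^\star = 0$, which forces $c_1 = \partial_2 c_2$. Since $c_1 \in \Ima \partial_2 \subseteq \Ker \partial_1$ and $\partial_1 \partial_2 = 0$, we have $c_1^\star \in \Ker \partial_1$. Being locally minimal and in $\Ker \partial_1$, either $c_1^\star = 0$ or $|c_1^\star| \ge d_1^{LM}(\mathcal{C})$. But the weighted norm is non-increasing along the algorithm, giving $|c_1^\star| \le \frac{\max}{\min}\,|c_1|$, so choosing $|c_1|$ below $\tfrac{\min}{\max}\,d_1^{LM}(\mathcal{C})$ rules out the second case; the constant factor is absorbed into the $O(\cdot)$ in the definition of $d_1^{LT}$, yielding $d_1^{LT}(\mathcal{C}) \ge d_1^{LM}(\mathcal{C})$ in the intended asymptotic sense.

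The only subtle point is the comparison between the weighted norm (which the algorithm controls) and the unweighted Hamming norm (which appears in the definition of $d_1^{LM}$); this is what forces the extra $\min/\max$ factor above. Everything else is a direct unpacking of the greedy flipping procedure, the assumption $c_1 \in \Ima \partial_2$, and the identity $\partial_1 \partial_2 = 0$.
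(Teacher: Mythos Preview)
Your proposal is correct and follows the same approach as the paper: run the (weighted) greedy flipping algorithm on $c_1$, accumulate the flipped basis vectors into $c_2$, and use the definition of $d_1^{LM}$ to force the locally minimal output to be $0$. You are in fact more careful than the paper about the conversion between the weighted norm (which the algorithm decreases) and the Hamming norm (which enters the definition of $d_1^{LM}$), and you correctly isolate the resulting $\min/\max$ factor.

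One small quibble: that constant factor lands in the \emph{threshold}, not in the $O(\cdot)$ bound on $|c_2|$, so strictly speaking what you (and, implicitly, the paper) prove is $d_1^{LT}(\mathcal{C}) \ge \tfrac{\min(w_\downarrow,w_\rightarrow)}{\max(w_\downarrow,w_\rightarrow)}\, d_1^{LM}(\mathcal{C})$. Since the lemma is only ever used to pass from linear $d_1^{LM}$ to linear $d_1^{LT}$ (as its title indicates), this is immaterial.
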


\begin{proof}
  Recall locally minimal distance means
    for any $c_1 \in \Ker \partial_1$ with $|c_1| < d_1^{LM}(\mathcal{C})$,
    greedy flipping algorithm returns the 0 vector
    and halt in $\Theta(|c_1|)$ steps.
  This means by summing up the basis vector used in each step,
    we obtain a vector $c_2 \in C_2$,
    such that $\partial c_2 = c_1$,
    and $|c_2| \le \Theta(|c_1|)$.
  This satisfies the criteria of strong locally testable,
    therefore, $d_1^{LT}(\mathcal{C}) \ge d_1^{LM}(\mathcal{C})$.
\end{proof}





\section{Proof of small set LTC lemma}

The proof of LTC relies on a key lemma which we call the small set LTC lemma.
The small set LTC lemma implies the chain complex
  has properties similar to LTC,
  namely the number violation in the constraint is larger than the weight of the input vector,
  with the additional assumption that the weight of the input vector is small.

\begin{lemma}[Small set LTC] \label{lem:small-set-LTC}
  Consider the chain complex, $C_2 \xrightarrow{\partial_2} C_1 \xrightarrow{\partial_1} C_0$, 
    constructed from the balanced product graph
    $X^{\updownarrow} \times_G X^{\leftrightarrow} = (V^*, E^*, F)$,
  where $X^{\updownarrow}$ and $X^{\leftrightarrow}$ satisfy the conditions in Theorem
    \ref{thm:1-sided-lossless-expander-with-symmetry}.

If $c_0 = \partial{c_1}$ for some short locally minimal $c_1 = (v_{10}, v_{01}) \in C_1$, 
    with $|v_{10}| < \min (c_\downarrow |V_{0*}|/w_\uparrow, c_\rightarrow |V_{*0}|), \\
    |v_{01}| < \min (c_\rightarrow |V_{*0}|/w_\leftarrow, c_\downarrow |V_{0*}|)$,
  then
  \begin{equation}
    (\frac{1}{2} - 8 \epsilon)|c_1|_w \le |c_0|_w,
  \end{equation}
    where $|c_1|_w = |v_{10}|/w_\downarrow + |v_{01}|/w_\rightarrow$
    and $|c_0|_w = |c_0|/(w_\downarrow w_\rightarrow)$. 
\end{lemma}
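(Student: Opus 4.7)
The plan is to rewrite the weighted inequality in edge-counted form and then bound $|c_0|$ from below. Let $T = w_\rightarrow |v_{10}| + w_\downarrow |v_{01}|$ be the total number of edges from $(v_{10},v_{01})$ into $V_{11}$; then the target $(\tfrac12-8\epsilon)|c_1|_w \le |c_0|_w$ is equivalent to $|c_0| \ge (\tfrac12 - 8\epsilon) T$. I will write $c_0 = A \oplus B$ with $A = \mathrm{supp}(L_{10\to 11}(v_{10}))$ and $B = \mathrm{supp}(L_{01\to 11}(v_{01}))$ in $\FF_2^{V_{11}}$, so that $|c_0| = |A| + |B| - 2|A\cap B|$, and then lower bound $|A|+|B|$ and upper bound the cancellation term $|A\cap B|$ separately.

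For the lower bound I would apply Corollary~\ref{cor:freely-implies-lossless} to the one-dimensional subgraphs $(V_{10}, V_{11}, E_{1*})$ and $(V_{01}, V_{11}, E_{*1})$: the hypotheses $|v_{10}| < c_\rightarrow|V_{*0}|$ and $|v_{01}| < c_\downarrow|V_{0*}|$ are exactly the small-set thresholds needed there. The standard unique-neighbor argument, applied on each side, bounds the multi-neighbor sets $M_A, M_B$ by $\epsilon_\rightarrow w_\rightarrow|v_{10}|$ and $\epsilon_\downarrow w_\downarrow|v_{01}|$ respectively, and delivers $|A| \ge (1 - 2\epsilon_\rightarrow) w_\rightarrow|v_{10}|$ and $|B| \ge (1 - 2\epsilon_\downarrow) w_\downarrow|v_{01}|$, hence $|A|+|B| \ge (1-2\epsilon)T$.

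For the cancellation I would bring in local minimality. For each $x_{00} \in V_{00}$, set $r(x_{00}) = |N_{V_{10}}(x_{00}) \cap v_{10}|$ and $s(x_{00}) = |N_{V_{01}}(x_{00}) \cap v_{01}|$. Flipping the basis vector $e_{x_{00}}$ changes $|c_1|_w$ by $2 - 2r/w_\downarrow - 2s/w_\rightarrow$, so local minimality is equivalent to the pointwise bound $w_\rightarrow r(x_{00}) + w_\downarrow s(x_{00}) \le w_\downarrow w_\rightarrow$; since the feasible set is inside $\{K \le w_\downarrow w_\rightarrow\}$ with $K := w_\rightarrow r + w_\downarrow s$, AM-GM gives $r(x_{00}) s(x_{00}) \le K^2/(4 w_\downarrow w_\rightarrow) \le K/4$. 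Each $x_{11} \in A \cap B$ has at least one $v_{10}$-neighbor via $E_{1*}$ and one $v_{01}$-neighbor via $E_{*1}$, which together with $x_{11}$ determine a unique fourth corner $x_{00}$ of a face by the balanced-product square structure; so double-counting such faces from the $V_{00}$ side gives $|A\cap B| \le \sum_{x_{00}} r(x_{00}) s(x_{00})$.

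The step I expect to be the main obstacle is making the resulting cancellation bound tight. A direct summation of the AM-GM inequality gives $\sum rs \le (w_\uparrow w_\rightarrow|v_{10}| + w_\leftarrow w_\downarrow|v_{01}|)/4$, which carries spurious factors of $w_\uparrow$ and $w_\leftarrow$ compared with the needed $T/4$. To absorb them I would invoke the remaining size hypotheses $|v_{10}| < c_\downarrow|V_{0*}|/w_\uparrow$ and $|v_{01}| < c_\rightarrow|V_{*0}|/w_\leftarrow$, which make the up-neighborhoods $N_{V_{00}}(v_{10})$ and $N_{V_{00}}(v_{01})$ small enough to apply the lossless expansion of $X_\updownarrow, X_\leftrightarrow$ in the reverse direction. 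Since every vertex of $v_{10}$ has $w_\uparrow \ge 2$ neighbors inside $N_{V_{00}}(v_{10})$, it must lie in the multi-neighbor set of that up-neighborhood, which has size at most $\epsilon_\downarrow w_\downarrow |N_{V_{00}}(v_{10})|$; this gives a spread lower bound $|N_{V_{00}}(v_{10})| \ge |v_{10}|/(\epsilon_\downarrow w_\downarrow)$ and hence forces the average of $r$ on its support to be small, and symmetrically for $v_{01}$. Combined with the local-minimality density bound and the parameter constraints $w_\uparrow \epsilon_\rightarrow \le \epsilon$ and $\epsilon_\downarrow \le \epsilon$, the extra weight factors collapse into an $O(\epsilon)T$ slack, yielding $|A \cap B| \le T/4 + O(\epsilon)T$. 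Assembling the lower bound on $|A|+|B|$ with this upper bound gives $|c_0| \ge (1-2\epsilon)T - 2(T/4 + O(\epsilon)T) = (\tfrac12 - 8\epsilon)T$ after careful bookkeeping of the constants, which is the claim; I expect all steps except the absorption of the $w_\uparrow, w_\leftarrow$ factors in the face sum to be routine.
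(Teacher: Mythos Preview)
Your architecture is the paper's: the symmetric-difference identity $|c_0|=|A|+|B|-2|A\cap B|$, the unique-neighbor lower bound $|A|+|B|\ge(1-2\epsilon)T$ via Corollary~\ref{cor:freely-implies-lossless}, the face count $|A\cap B|\le |f|=\sum_{x_{00}}r(x_{00})s(x_{00})$ via square completion, and the pointwise local-minimality constraint $w_\rightarrow r+w_\downarrow s\le w_\downarrow w_\rightarrow$ combined with AM--GM. You also correctly identify the only nontrivial step, namely getting $\sum rs\le T/4+O(\epsilon)T$ rather than the naive $(w_\uparrow w_\rightarrow|v_{10}|+w_\leftarrow w_\downarrow|v_{01}|)/4$.

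The gap is in the tool you propose for that step. The spread bound $|N_{V_{00}}(v_{10})|\ge |v_{10}|/(\epsilon_\downarrow w_\downarrow)$ is correct, but it only constrains the support size of $r$, not its upper tail, and your ``average of $r$ on its support'' is just $\sum r/|\mathrm{supp}(r)|\le w_\uparrow\epsilon_\downarrow w_\downarrow$, which recycles the very $w_\uparrow$ factor you want to kill. Concretely, configurations with $r\equiv\alpha w_\downarrow$, $s\equiv(1-\alpha)w_\rightarrow$ on a common support of size $w_\uparrow|v_{10}|/(\alpha w_\downarrow)$ satisfy local minimality with equality, satisfy both spread bounds for suitable $\alpha$, and yet give $\sum rs/T$ of order $w_\uparrow$ or larger; the constraints $w_\uparrow\epsilon_\rightarrow\le\epsilon$ and $\epsilon_\downarrow\le\epsilon$ do not by themselves rule this out. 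So ``the extra weight factors collapse'' does not follow from the spread bound plus bookkeeping.

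What the paper extracts from the same reverse-direction lossless hypothesis is stronger: setting $r^1=(r-\epsilon_\downarrow w_\downarrow)_+$ and $r^2=\min(r,\epsilon_\downarrow w_\downarrow)$ (and likewise $s=s^1+s^2$), the edge-count inequality $|E(v_0,v_{10})|\le\epsilon_\downarrow w_\downarrow|v_0|+|v_{10}|$ (Lemma~\ref{lem:lossless-ineq}) applied with $v_0=\{x:r(x)>\epsilon_\downarrow w_\downarrow\}$ yields $\sum r^1\le|v_{10}|$, and similarly $\sum s^1\le|v_{01}|$ (this is Lemma~\ref{lem:lossless-split}). The four cross terms are then handled separately: $\sum r^1s^1\le T/4$ comes from your AM--GM combined with $\sum(r^1/w_\downarrow+s^1/w_\rightarrow)=|v_{10}|/w_\downarrow+|v_{01}|/w_\rightarrow$ rather than the naive $\sum(r/w_\downarrow+s/w_\rightarrow)$; the mixed terms $\sum r^1s^2,\sum r^2s^1$ are at most $\epsilon_\rightarrow w_\rightarrow|v_{10}|$ and $\epsilon_\downarrow w_\downarrow|v_{01}|$; and $\sum r^2s^2$ is controlled by majorization, giving $\le 2w_\uparrow\epsilon_\rightarrow\cdot w_\rightarrow|v_{10}|\le 2\epsilon\, w_\rightarrow|v_{10}|$, which is precisely where the asymmetric hypothesis $w_\uparrow\epsilon_\rightarrow\le\epsilon$ enters. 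Replacing your spread argument by this excess-degree splitting closes the proof.
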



Small set LTC is used to show linear locally minimal distance and thus showing LTC.


\begin{corollary}[Small set LTC implies linear locally minimal distance] \label{cor:linear-local-minimal-distance}
  Under the same assumption as in the lemma~\ref{lem:small-set-LTC}
    and $\epsilon < 1/16$,
    we have
  \begin{equation}
    d_1^{LM}(\mathcal{C}) \ge \min (c_\downarrow|V_{0*}|/w_\uparrow, c_\rightarrow|V_{*0}|, 
      c_\rightarrow|V_{*0}|/w_\leftarrow, c_\downarrow|V_{0*}|).
  \end{equation}
\end{corollary}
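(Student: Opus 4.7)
The plan is to derive the corollary by contrapositive, using Lemma~\ref{lem:small-set-LTC} to rule out the existence of any non-zero locally minimal $c_1 \in \Ker \partial_1$ whose weight is below the stated threshold. Let $M \coloneqq \min (c_\downarrow|V_{0*}|/w_\uparrow,\ c_\rightarrow|V_{*0}|,\ c_\rightarrow|V_{*0}|/w_\leftarrow,\ c_\downarrow|V_{0*}|)$ denote the claimed lower bound.

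First I would take an arbitrary non-zero locally minimal $c_1 = (v_{10}, v_{01}) \in \Ker \partial_1$ and assume for contradiction that $|c_1| < M$. Since $|v_{10}|, |v_{01}| \le |c_1| < M$, both weight bounds needed in Lemma~\ref{lem:small-set-LTC} are satisfied automatically, namely
\begin{equation}
  |v_{10}| < \min(c_\downarrow|V_{0*}|/w_\uparrow,\ c_\rightarrow|V_{*0}|), \quad |v_{01}| < \min(c_\rightarrow|V_{*0}|/w_\leftarrow,\ c_\downarrow|V_{0*}|).
\end{equation}
So the hypothesis of the small set LTC lemma applies to $c_1$ with the corresponding syndrome $c_0 \coloneqq \partial_1 c_1$.

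Next, because $c_1 \in \Ker \partial_1$, we have $c_0 = 0$ and hence $|c_0|_w = 0$. Applying Lemma~\ref{lem:small-set-LTC} yields
\begin{equation}
  \left(\tfrac{1}{2} - 8\epsilon\right)|c_1|_w \le |c_0|_w = 0.
\end{equation}
The assumption $\epsilon < 1/16$ makes the coefficient $\tfrac{1}{2} - 8\epsilon$ strictly positive, so we must have $|c_1|_w = 0$. Since $|c_1|_w = |v_{10}|/w_\downarrow + |v_{01}|/w_\rightarrow$ is a non-negative combination with positive weights, this forces $v_{10} = 0$ and $v_{01} = 0$, i.e.\ $c_1 = 0$, contradicting the choice of $c_1$.

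Consequently every non-zero locally minimal $c_1 \in \Ker \partial_1$ satisfies $|c_1| \ge M$, which is precisely the claimed bound on $d_1^{LM}(\mathcal{C})$. I do not expect any serious obstacle here: the corollary is essentially a contrapositive packaging of Lemma~\ref{lem:small-set-LTC}, and the only quantitative input is that $\epsilon < 1/16$ keeps the coefficient on $|c_1|_w$ bounded away from zero. The one thing to be careful about is the passage from the componentwise bounds on $|v_{10}|, |v_{01}|$ to the single bound on $|c_1|$, which is immediate since the assumed $|c_1| < M$ dominates each component.
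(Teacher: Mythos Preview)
Your proof is correct and follows essentially the same route as the paper: both argue by contrapositive, feeding a putative small non-zero locally minimal $c_1 \in \Ker\partial_1$ into Lemma~\ref{lem:small-set-LTC}, using $c_0 = 0$ and $\epsilon < 1/16$ to force $|c_1|_w = 0$. Your version is slightly more explicit in observing that $|v_{10}|,|v_{01}| \le |c_1| < M$ suffices to trigger the lemma's smallness hypotheses, but the logic is identical.
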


Because $\Theta(|V_{0*}|) = \Theta(|V_{*0}|) = |V_{00}| = n$,
  and $w_\downarrow$, $w_\uparrow$, $w_\rightarrow$, $w_\leftarrow$, $c$ are $\Theta(1)$,
  we have $d_1^{LM}(\mathcal{C}) \ge \Theta(n)$
which means the locally minimal distance is linear.

\begin{proof} [Proof of Corollary~\ref{cor:linear-local-minimal-distance}]
  Recall the definition of locally minimal distance,
    $d_1^{LM}(\mathcal{C}) = \min_{c_1 \in Ker \partial_1, c_1\text{ is locally minimal}, c_1 \ne 0} |c_1|$.

  By lemma~\ref{lem:small-set-LTC} we know if 
    $|v_{10}| < \min(c|V_{0*}|/w_\leftarrow, c|V_{*0}|)$ and 
    $|v_{01}| < \min(c|V_{*0}|/w_\uparrow, c|V_{0*}|)$
    then $(\frac{1}{2} - 8 \epsilon) |c_1|_w \le |c_0|_w = 0$,
    for $c_1 \in Ker \partial_1, c_1\text{ is locally minimal}$.
  Because $\epsilon < 1/16$, this implies $|c_1|_w = 0$, which means $c_1 = 0$.
  Therefore, if $c_1 \ne 0$,
    at least one of 
      $|v_{10}| < \min(c|V_{0*}|/w_\leftarrow, c|V_{*0}|), 
      |v_{01}| < \min(c|V_{*0}|/w_\uparrow, c|V_{0*}|)$
    is violated.
  So $|c_1| \ge \min (c_\downarrow|V_{0*}|/w_\uparrow, c_\rightarrow|V_{*0}|, 
  c_\rightarrow|V_{*0}|/w_\leftarrow, c_\downarrow|V_{0*}|)$.
\end{proof}

Now, we comment on the optimality our result.
Note that the parameter for the ratio between 
  $|c_1|_w$ and $|c_0|_w$ approaches $\frac{1}{2}$
  as $\epsilon$ approaches 0.
We give an example that shows $\frac{1}{2}$ is optimal.

\begin{example}
  Assume $w_\downarrow, w_\rightarrow$ are both even.
  Pick a vertex $x_{00} \in V_{00}$.
  Pick $n_{10} \subseteq N_{V_{10}}(x_{00})$ and $n_{01} \subseteq N_{V_{01}}(x_{00})$
    of size $w_\downarrow/2$ and $w_\rightarrow/2$.
  
  Set $c_1 = (n_{10}, n_{01})$ which is locally minimal. 
  Because each flip $\partial e_2$, 
    flips $w_\downarrow$ bits in $V_{10}$
    and $w_\rightarrow$ bits in $V_{01}$,
    no flip $\partial e_2$ can reduce the weight $|c_1|_w$.
  Now, $c_0 = \partial c_1 = n_{10} \times_{x_{00}} (N_{V_{01}}(x_{00}) - n_{01}) \cup (N_{V_{10}}(x_{00}) - n_{10}) \times_{x_{00}} n_{01}$,
    so $|c_0| = w_\downarrow w_\rightarrow / 2$.
  The notation $\times_{x_{00}}$ is discussed immediately in \ref{sec:lemma-balanced-product}
  Overall, $|c_1|_w = 1$ and $|c_0|_w = 1/2$.
\end{example}

\begin{figure}
  \centering
  \begin{tikzpicture}
    \def\eps{2pt}

\fill (0,0) circle[radius=1pt] node [label=above left:$x_{00}$] {};

\filldraw[fill=black, draw=black] (-\eps,-2) rectangle (\eps,-4);
\filldraw[fill=white, draw=black] (-\eps,-4) rectangle (\eps,-6);

\filldraw[fill=black, draw=black] (2,-\eps) rectangle (4,\eps);
\filldraw[fill=white, draw=black] (4,-\eps) rectangle (6,\eps);

\filldraw[fill=white, draw=black] (2,-2) rectangle (4,-4);
\filldraw[fill=white, draw=black] (4,-4) rectangle (6,-6);
\filldraw[fill=black, draw=black] (2,-4) rectangle (4,-6);
\filldraw[fill=black, draw=black] (4,-2) rectangle (6,-4);

\path (-\eps,-2) -- node[anchor=south, rotate = 90]{$n_{10}(x_{00})$} (-\eps,-4);
\path (-\eps,-4) -- node[anchor=south east, rotate = 90]{$N_{V_{10}}(x_{00}) - n_{10}(x_{00})$} (-\eps,-4);

\path (2,\eps) -- node[anchor=south]{$n_{01}(x_{00})$} (4,\eps);
\path (4,\eps) -- node[anchor=south west]{$N_{V_{01}}(x_{00}) - n_{01}(x_{00})$} (4,\eps);

  \end{tikzpicture}
  \caption{Sharp example with $|c_0|_w = \frac{1}{2} |c_1|_w$. Black means the value is 1. White means the value is 0.}
  \label{fig:sharp-example-small-set-LTC}
\end{figure}
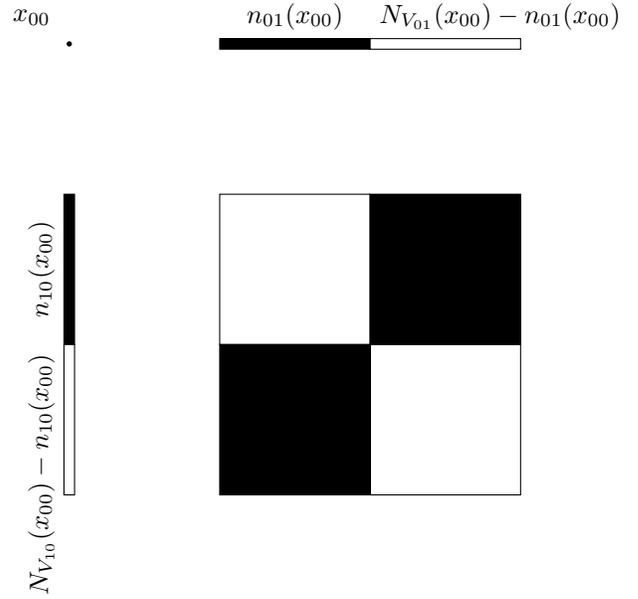

\subsection{Lemma for balanced product} \label{sec:lemma-balanced-product}

Here, we use the labeling developed in \ref{app_Chain_Complex_BP} to show whenever we have 3 vertices that form a wedge,
there exists a vertex that completes the wedge into a square.

\begin{lemma}[Square completion] \label{lem:square-completion}
  Given the balanced product graph $(V^*, E^*, F)$,
  for any $x_{00} \in V_{00}, x_{10} \in V_{10}, x_{01} \in V_{01}$,
  where $(x_{00}, x_{10}) \in E_{*0}, (x_{00}, x_{01}) \in E_{0*}$
  there exists a unique $x_{11} \in V_{11}$,
  such that $(x_{00}, x_{10}, x_{01}, x_{11}) \in F$.
\end{lemma}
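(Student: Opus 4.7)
The plan is to reduce the statement to an algebraic calculation by invoking the coordinate labelling developed in Appendix~\ref{app_Chain_Complex_BP}. In that labelling, each $V_{\alpha\beta}$ is identified with $G \times R_\alpha \times S_\beta$, and the explicit formulas for $L_{00 \rightarrow 10}$ and $L_{00 \rightarrow 01}$ describe exactly how neighbors are parametrized in terms of the right-Cayley decompositions $X_{r_0, r_1} = \Gamma_{\textnormal{right}}(G, A_{r_0, r_1})$ and $Y_{s_0, s_1} = \Gamma_{\textnormal{right}}(G, B_{s_0, s_1})$.

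Concretely, I would first write $x_{00} = (h, r_0, s_0)$. The hypothesis $(x_{00}, x_{10}) \in E_{*0}$ together with the formula
\begin{equation*}
    L_{00 \rightarrow 10}((h, r_0, s_0)) = \sum_{r_1 \in R_1} \sum_{a \in A_{r_0, r_1}} (a^{-1}h, r_1, s_0)
\end{equation*}
forces $x_{10} = (a^{-1}h, r_1, s_0)$ for a uniquely determined pair $(r_1, a)$: the index $r_1$ is read off from the second coordinate of $x_{10}$, and then $a \in A_{r_0, r_1}$ is recovered from the first coordinate via $a = h \cdot (a^{-1}h)^{-1}$. Symmetrically, $(x_{00}, x_{01}) \in E_{0*}$ yields $x_{01} = (hb, r_0, s_1)$ with uniquely determined $(s_1, b)$, $b \in B_{s_0, s_1}$.

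With this data extracted, I would set $x_{11} \coloneqq (a^{-1}hb, r_1, s_1)$ and verify that $(x_{00}, x_{10}, x_{01}, x_{11})$ is a face. By the balanced-product definition, faces descend from pairs of edges $((g_0, r_0), (g_0 a, r_1)) \in E_X$ and $((g_0', s_0), (g_0' b, s_1)) \in E_Y$; choosing any representatives with $g_0^{-1} g_0' = h$ and passing to the $(h, r_\alpha, s_\beta)$-coordinates reproduces exactly the quadruple above, so it is indeed a face. Uniqueness follows because any face containing the given $x_{00}, x_{10}, x_{01}$ must arise from some pair of $E_X$- and $E_Y$-edges, and those edges are forced by the three given corners via the same unique-$(r_1, a)$ and unique-$(s_1, b)$ extraction; hence the fourth corner is forced as well.

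I do not anticipate a serious obstacle: after the coordinate setup, the proof is essentially bookkeeping. The one step that requires care is ensuring that the coordinate labelling is genuinely a bijection, so that the extracted $(r_1, a)$ and $(s_1, b)$ pin down $x_{11}$ rather than merely a $G$-orbit of candidates; this is where the freeness of the $G$-action — already invoked in constructing the coordinates — is essential.
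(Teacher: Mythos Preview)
Your proposal is correct and follows essentially the same route as the paper: both use the coordinate labelling from Appendix~\ref{app_Chain_Complex_BP} to write $x_{00} = (h, r_0, s_0)$, $x_{10} = (a^{-1}h, r_1, s_0)$, $x_{01} = (hb, r_0, s_1)$ and then set $x_{11} = (a^{-1}hb, r_1, s_1)$. Your treatment of uniqueness is actually more explicit than the paper's own proof, which only exhibits the fourth vertex.
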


We denote the vertex $x_{11} = x_{10} \times_{x_{00}} x_{01}$.
This notation also applies to the sets $v_{10} \times_{x_{00}} v_{01} = \{x_{10} \times_{x_{00}} x_{01} : x_{10} \in v_{10}, x_{01} \in v_{01}\}$. 
Note that this notation only make sense when all vertices in $v_{10}$ and $v_{01}$ are neighbor to $x_{00}$.

Because the construction is symmetric,
  the square completion property also holds for other combination of 3 vertices.
For example, for any $x_{11} \in V_{11}, x_{10} \in V_{10}, x_{01} \in V_{01}$,
where $(x_{10}, x_{11}) \in E_{1*}, (x_{01}, x_{11}) \in E_{*1}$
there exists a unique $x_{00} \in V_{00}$,
such that $(x_{00}, x_{10}, x_{01}, x_{11}) \in F$.

\begin{proof}
  We use the labeling in \ref{app_Chain_Complex_BP}.
  Because $(x_{00}, x_{10}) \in E_{*0}, (x_{00}, x_{01}) \in E_{0*}$,
  we can write $x_{00} = (h, r_0, s_0), x_{10} = (a^{-1}h, r_1, s_0), x_{01} = (hb, r_0, s_1)$, where $h \in G, a \in A_{r_0, r_1}, b \in B_{s_0, s_1}$.
  Now, set $x_{11} = (a^{-1}hb, r_1, s_1)$.
  Then, $(x_{00}, x_{10}, x_{01}, x_{11}) \in F$ forms a square.
\end{proof}

\subsection{Lemma for lossless expanders}

Here, we state two technical lemmas for lossless expanders.
The first is a simple lemma about a lower bound on the number of unique neighbors.
The second lemma is to make quantitative statement 
  about the intuition that lossless expanders look like trees.

Given a bipartite graph $(V_0, V_1, E)$.
If a vertex $x_1 \in V_1$ is neighbor to exactly one vertex in $v_0 \subseteq V_0$,
  we say $x_1$ is a unique neighbor of $v_0$.
We denote the unique neighbor of $v_0$ in $V_1$ as $N^{\textnormal{unique}}_{V_1}(v_0)$.

The first lemma says a lossless expander graph has many unique neighbors.
\begin{lemma} [Lossless expander implies unique expander] \label{lem:unique-expander}
  Let $(V_0, V_1, E)$ be a $(w_0, w_1)$-regular bipartite graph and $(c, \epsilon)$-lossless expander.
  Then for small set $v_0 \subseteq V_0$, $|v_0| < c |V_0|$,
  we have $|N^{\textnormal{unique}}_{V_1}(v_0)| \ge (1-2\epsilon) w_0 |v_0|$.
\end{lemma}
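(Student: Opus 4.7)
The plan is to prove the lemma by a standard double-counting argument on edges between $v_0$ and its neighborhood $N_{V_1}(v_0)$, splitting the neighborhood into unique neighbors and non-unique ones.

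First I would count the total number of edges leaving $v_0$: by $w_0$-regularity this is exactly $w_0 |v_0|$. Next I would partition the endpoints in $V_1$ of these edges into two classes, the unique neighbors $N^{\textnormal{unique}}_{V_1}(v_0)$ (each contributing exactly one edge from $v_0$) and the ``shared'' neighbors in $N_{V_1}(v_0) \setminus N^{\textnormal{unique}}_{V_1}(v_0)$ (each contributing at least two edges from $v_0$, by definition of being non-unique). This yields the inequality
\begin{equation}
  w_0 |v_0| \ge |N^{\textnormal{unique}}_{V_1}(v_0)| + 2\bigl(|N_{V_1}(v_0)| - |N^{\textnormal{unique}}_{V_1}(v_0)|\bigr).
\end{equation}

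Then I would invoke the $(c,\epsilon)$-lossless expansion hypothesis, which applies because $|v_0| < c|V_0|$, to replace $|N_{V_1}(v_0)|$ by the lower bound $(1-\epsilon) w_0 |v_0|$. Rearranging the displayed inequality gives
\begin{equation}
  |N^{\textnormal{unique}}_{V_1}(v_0)| \ge 2|N_{V_1}(v_0)| - w_0 |v_0| \ge 2(1-\epsilon) w_0 |v_0| - w_0 |v_0| = (1 - 2\epsilon) w_0 |v_0|,
\end{equation}
which is the claim.

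There is no real obstacle here; the only thing to be careful about is the bookkeeping in the edge-counting step, namely that ``unique neighbor'' means incident to exactly one vertex of $v_0$ so it contributes one edge, while a non-unique neighbor in $N_{V_1}(v_0)$ is by definition incident to at least two vertices of $v_0$ and hence contributes at least two edges. Once that is set up correctly, the lossless expansion bound on $|N_{V_1}(v_0)|$ converts the edge-count identity directly into the desired lower bound on $|N^{\textnormal{unique}}_{V_1}(v_0)|$.
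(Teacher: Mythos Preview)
Your proposal is correct and is essentially the same double-counting argument as in the paper: the paper writes $a_i$ for the number of neighbors of degree $i$ into $v_0$, uses $\sum_i a_i = |N_{V_1}(v_0)| \ge (1-\epsilon)w_0|v_0|$ and $\sum_i i a_i = w_0|v_0|$, and concludes $a_1 \ge 2\sum_i a_i - \sum_i i a_i \ge (1-2\epsilon)w_0|v_0|$, which is exactly your inequality $|N^{\textnormal{unique}}_{V_1}(v_0)| \ge 2|N_{V_1}(v_0)| - w_0|v_0|$ in different notation.
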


\begin{proof}
  Let $a_i$ be the number of $x_1$ with $\deg_{v_0}(x_1) = i$.
  
  By the definition of lossless expanders, $|N_{V_1}(v_0)| \ge (1-\epsilon) w_0 |v_0|$.
  So $\sum_{i=1}^{w_1} a_i \ge (1-\epsilon) w_0 |v_0|$.

  By counting the edges between $v_0$ and $N_{V_1}(v_0)$ in two ways,
  we have $w_0 v_0 = \sum_{i=1}^{w_1} i a_i$.

  Together, $a_1 \ge 2 \sum_{i=1}^{w_1} a_i - \sum_{i=1}^{w_1} i a_i \ge (1-2\epsilon) w_0 |v_0|$,
  where we use $a_i \ge 0$ in the first inequality.
\end{proof}

Now, we turn to the second lemma.

First, we recall some definitions.
A multiset is a modification of the concept of a set,
  where it is allowed for multiple instances.
For example, $\{a, a, b\}$ is a multiset
  with 2 instances of $a$ and 1 instance of $b$.
Another way to represent a multiset is to
  indicate the number of instances on the upper indices.
For example, $\{a, a, b\}$ can be written as $\{a^2, b\}$.

We say a multiset $A$ majorizes another multiset $B$
  if $\sum_{i=1}^k a_i \ge \sum_{i=1}^k b_i$
  for all $k = 1, ..., i_{\max}$
  where $i_{\max}=|A|=|B|$
    and $\{a_i\}_{i=1}^{i_{\max}}$ and $\{b_i\}_{i=1}^{i_{\max}}$
    are sorted sequences of $A$ and $B$ in the descending order.
This is denoted as $A \succeq B$.
In the case where $A$ and $B$ have different number of elements
  we append 0s so that they become the same size
  and can be compared.

Here is the lemma.
\begin{lemma}[] \label{lem:lossless-split}
  Given a $(w_0, w_1)$ regular bipartite graph, $(V_0, V_1, E)$,
    with 1-sided lossless expansion $(c, \epsilon)$.
  For any small subset $v_1 \subseteq V_1$ with $|v_1| < c|V_0|/w_1$,
    we can split $\deg_{v_1}(x_0)$ into 2 contributions, 
    $\deg_{v_1}(x_0) = d^1(x_0) + d^2(x_0)$,
    such that

  \begin{equation} \label{eq:bound-da}
    \sum_{x_0 \in V_0} d^1(x_0) \le |v_1|,
  \end{equation}

  \begin{equation} \label{eq:bound-db}
    \{d^2(x_0)\}_{x_0 \in V_0}
    \preceq \{\epsilon w_0^{\lceil \frac{w_1}{\epsilon w_0} |v_1| \rceil}\},
  \end{equation}

\end{lemma}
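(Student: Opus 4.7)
The plan is to exhibit an explicit decomposition and to bound $\sum d^1$ by a single application of the lossless-expansion property to the set of ``heavy'' vertices.

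First I define $d^2(x_0) \coloneqq \min(\deg_{v_1}(x_0), \epsilon w_0)$ and $d^1(x_0) \coloneqq \deg_{v_1}(x_0) - d^2(x_0) = \max(0, \deg_{v_1}(x_0) - \epsilon w_0)$. The entrywise bound $d^2(x_0) \le \epsilon w_0$ bounds every top-$k$ partial sum of $\{d^2(x_0)\}$ by $k\epsilon w_0$ for $k \le \lceil \tfrac{w_1}{\epsilon w_0}|v_1|\rceil$, while for larger $k$ the partial sum on the left saturates at $\sum_{x_0} d^2(x_0) \le \sum_{x_0} \deg_{v_1}(x_0) = w_1|v_1|$ and the corresponding partial sum of the majorizing multiset equals $\lceil \tfrac{w_1}{\epsilon w_0}|v_1|\rceil \cdot \epsilon w_0 \ge w_1|v_1|$. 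This handles \eqref{eq:bound-db}.

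For \eqref{eq:bound-da}, introduce the heavy set $S_0 \coloneqq \{x_0 \in V_0 : \deg_{v_1}(x_0) > \epsilon w_0\}$, so that $\sum_{x_0} d^1(x_0) = |E(S_0, v_1)| - \epsilon w_0 |S_0|$. Since $S_0 \subseteq N_{V_0}(v_1)$ and $|N_{V_0}(v_1)| \le w_1|v_1|$ by counting the edges incident to $v_1$, the hypothesis $|v_1| < c|V_0|/w_1$ gives $|S_0| < c|V_0|$, placing $S_0$ inside the regime in which lossless expansion applies. Applying it gives $|N_{V_1}(S_0)| \ge (1-\epsilon) w_0 |S_0|$, and splitting the neighborhood across $v_1$ and its complement in $V_1$ yields $|N_{V_1}(S_0) \cap (V_1 \setminus v_1)| \le w_0|S_0| - |E(S_0, v_1)|$ because each vertex in the complement is covered by at least one edge of $S_0$ avoiding $v_1$. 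Subtracting gives $|N_{V_1}(S_0) \cap v_1| \ge |E(S_0, v_1)| - \epsilon w_0 |S_0|$, and since this is at most $|v_1|$, I conclude $|E(S_0, v_1)| - \epsilon w_0 |S_0| \le |v_1|$, which is exactly \eqref{eq:bound-da}.

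The delicate point I expect to be the main obstacle is ensuring that the witness $S_0$ actually sits in the lossless-expansion regime $|S_0| < c|V_0|$; a reader might be tempted to extract $S_0$ via an iterated peeling or a Hall-type matching, which would force nontrivial control on every subset of $S_0$. The proof here sidesteps this by using only the aggregate form of the expansion bound, with the crude estimate $|S_0| \le w_1|v_1|$ combined with $|v_1| < c|V_0|/w_1$ just barely sufficient even when $\epsilon w_0 < 1$. Everything else reduces to the one-line neighbor-splitting computation above.
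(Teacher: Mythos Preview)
Your proof is correct and essentially identical to the paper's. You use the same decomposition $d^1(x_0)=\max(0,\deg_{v_1}(x_0)-\epsilon w_0)$, the same majorization argument for \eqref{eq:bound-db}, and the same edge-counting inequality $|E(S_0,v_1)|\le \epsilon w_0|S_0|+|v_1|$ for \eqref{eq:bound-da}; the paper merely isolates that last inequality as a separate lemma (Lemma~\ref{lem:lossless-ineq}) and applies it, whereas you inline its proof via the neighborhood split. One small point in your favor: you make explicit the verification $|S_0|\le w_1|v_1|<c|V_0|$ needed to invoke lossless expansion on $S_0$, which the paper leaves implicit.
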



To show the lemma, we use result.
\begin{lemma} \label{lem:lossless-ineq}
  Given a $(w_0, w_1)$-regular bipartite graph, $(V_0, V_1, E)$,
    with 1-sided $(c, \epsilon)$-lossless expansion from $V_0$ to $V_1$.
  For any subsets $v_0 \subseteq V_0$ and $v_1 \subseteq V_1$,
    with $|v_0| < c|V_0|$,
    the number of edges between $v_0$ and $v_1$, $|E(v_0, v_1)|$,
    is bounded by
  \begin{equation} \label{eq:ineq-lossless-v1}
    |E(v_0, v_1)| = \sum_{x_0 \in v_0} \deg_{v_1}(x_0) \le \epsilon w_0 |v_0| + |v_1|,
  \end{equation}
\end{lemma}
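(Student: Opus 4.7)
The plan is to prove the inequality by a direct edge-counting argument, splitting the edges incident to $v_0$ into those landing in $v_1$ and those landing in its complement $V_1 \setminus v_1$.

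First I would use regularity to write the total number of edges out of $v_0$ as $|E(v_0, V_1)| = w_0 |v_0|$, and then decompose
\begin{equation*}
  |E(v_0, v_1)| = w_0 |v_0| - |E(v_0, V_1 \setminus v_1)|.
\end{equation*}
So the task reduces to producing a good lower bound on $|E(v_0, V_1 \setminus v_1)|$.

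The key observation is that every vertex in $N_{V_1}(v_0) \setminus v_1$ contributes at least one edge to $E(v_0, V_1 \setminus v_1)$, simply by virtue of being a neighbor of $v_0$ that is not removed. Hence
\begin{equation*}
  |E(v_0, V_1 \setminus v_1)| \ge |N_{V_1}(v_0) \setminus v_1| \ge |N_{V_1}(v_0)| - |v_1|.
\end{equation*}
Because $|v_0| < c|V_0|$, the hypothesis of 1-sided $(c, \epsilon)$-lossless expansion applies and gives $|N_{V_1}(v_0)| \ge (1-\epsilon) w_0 |v_0|$. Substituting back yields
\begin{equation*}
  |E(v_0, v_1)| \le w_0 |v_0| - (1-\epsilon) w_0 |v_0| + |v_1| = \epsilon w_0 |v_0| + |v_1|,
\end{equation*}
which is exactly the claim.

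There isn't really a serious obstacle here; the only subtlety is making sure to discard the contribution from neighbors inside $v_1$ cleanly. The weakest step in the chain of inequalities is the bound $|N_{V_1}(v_0) \setminus v_1| \ge |N_{V_1}(v_0)| - |v_1|$, which is tight when $v_1 \subseteq N_{V_1}(v_0)$; one might worry this wastes something when $v_1$ is large or mostly disjoint from $N_{V_1}(v_0)$, but in the regime of interest (where the lemma is applied in the proof of Lemma~\ref{lem:lossless-split}) the bound is exactly what is needed, so no refinement is necessary.
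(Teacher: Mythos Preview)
Your proof is correct and is essentially identical to the paper's own argument: both start from $|E(v_0,V_1)|=w_0|v_0|$, lower bound the number of edges removed when discarding $N_{V_1}(v_0)\setminus v_1$ by $|N_{V_1}(v_0)|-|v_1|$, and then invoke the lossless expansion bound $|N_{V_1}(v_0)|\ge(1-\epsilon)w_0|v_0|$. The only difference is cosmetic phrasing (the paper speaks of ``removing vertices'' from the induced subgraph, you speak of splitting edges into $v_1$ versus $V_1\setminus v_1$).
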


\begin{proof}
  We prove the inequality by 
    considering the graph consists of
      $v_0$ and its neighbors $N_{V_1}(v_0)$,
    then remove vertices $N_{V_1}(v_0)-v_1$ and 
    the edges connected to those vertices.

  First, the graph form by $v_0$ and $N_{V_1}(v_0)$ has $w_0 |v_0|$ edges.
  Next, we remove $N_{V_1}(v_0)-v_1$.
  By doing so,
    we remove at least $|N_{V_1}(v_0)|-|v_1|$ edges.
  So 
  \begin{equation*}
    |E(v_0, v_1)| \le w_0 |v_0| - (|N_{V_1}(v_0)|-|v_1|) \le \epsilon w_0 |v_0| + |v_1|,
  \end{equation*}
    where the last inequality follows from the lossless assumption
    $|N_{V_1}(v_0)| \ge (1-\epsilon) w_0 |v_0|$.
\end{proof}

\begin{proof} [Proof of Lemma~\ref{lem:lossless-split}]
  By setting $d^1(x_0) = \max(\deg(x_0)-\epsilon w_0, 0)$, 
    $d^2(x_0) = \deg_{v_1}(x_0) - d^1(x_0)$,
    we claim they satisfy the inequalities.

  First, we show $\sum_{x_0 \in V_0} d^1(x_0) \le |v_1|$.
  Let $v_0 = \{x_0 \in V_0: \deg_{v_1}(x_0) > \epsilon w_0\}$.
  Then,
  \begin{align}
    & \sum_{x_0 \in V_0} d^1(x_0) \\
    &= \sum_{x_0 \in v_0} (\deg_{v_1}(x_0)-\epsilon w_0) \\
    &\le |v_1|,
  \end{align}
  where the last inequality follows from the lemma~\ref{lem:lossless-ineq},
    $\sum_{x_0 \in v_0} \deg_{v_1}(x_0) \le \epsilon w_0 |v_0| + |v_1|$.

  Next, we show $\{d^2(x_0)\}_{x_0 \in V_0}
    \preceq \{\epsilon w_0^{\lceil \frac{w_1}{\epsilon w_0} |v_1| \rceil}\}$.
  Because $d^2(x_0) \le \epsilon w_0$
    and $\sum_{x_0 \in V_0} d^2(x_0) \le \sum_{x_0 \in V_0} \deg_{v_1}(x_0) = w_1 |v_1|$,
    this implies the desired majorization.
\end{proof}

\subsection{Prove the small set LTC lemma}

We first describe the proof idea.

The goal is to show there are a large number of vertices in $V_{11}$
  that are uniquely connected to $v_{01} \cup v_{10}$.
When a vertex $x_{11} \in V_{11}$ is uniquely connected,
  $\partial c_1$ is non zero at $x_{11}$.

To show that most of the vertices are uniquely connected, we observe the following.
  Because $(V_{01}, V_{11}, E_{*1})$ is a lossless expander, by Corollary~\ref{cor:freely-implies-lossless} in Appendix~\ref{app_LTC_Proof},
  most neighbors of $v_{01}$ in $V_{11}$ has unique neighbor in $v_{01}$.
Similarly, most neighbors of $v_{10}$ in $V_{11}$ has unique neighbor in $v_{10}$.
That means, most vertices in $v_{11}$ is neighbor to 
  at most one vertex in $v_{01}$ and one vertex in $v_{10}$.
All these vertices are uniquely connected to $v_{01} \cup v_{10}$,
  unless the vertices are neighbor to both $v_{01}$ and $v_{10}$.
So the last thing we need to estimate is the number of vertices that is
  unique neighbor to both $v_{01}$ and $v_{10}$.

Now, we use the squares in the balanced product graph.
  By Lemma~\ref{lem:square-completion},
    if $x_{11}$ is neighbor to $x_{01} \in v_{01}$ and $x_{10} \in v_{10}$,
  then there is a vertex $x_{00} \in V_{00}$
  where $(x_{00}, x_{10}, x_{01}, x_{11})$ forms a square.
This implies, the number of vertices that is
  unique neighbor to both $v_{01}$ and $v_{10}$,
  is less than or equal to the number of squares that contains
    one vertex in $v_{01}$ and one vertex in $v_{10}$.

The final step of bounding the number of squares relies on
  the bound for degrees developed in the previous subsection.
The detail is explained in the proof.

\begin{proof}[Proof of Lemma~\ref{lem:small-set-LTC}]
  Given $c_1 = (v_{01}, v_{10})$.
    When a vertex $x_{11}$ is 
      uniquely connected to $v_{01}$
      and not neighbor to $v_{10}$
      or uniquely connected to $v_{10}$
      and not neighbor to $v_{01}$,
    $x_{11} \in c_0$.
  For those vertices that are neighbor to both $v_{10}$ and $v_{01}$,
    say $x_{10} \in v_{10}, x_{01} \in v_{01}$
    and $(x_{10}, x_{11}) \in E_{1*}, (x_{01}, x_{11}) \in E_{*1}$ are neighbors,
    one can find another vertex $x_{00} \in V_{00}$
    where $(x_{00}, x_{01}, x_{10}, x_{11}) \in F$.
  
  So
  \begin{equation}
    |c_0| \ge |N^{\textnormal{unique}}_{11}(v_{01})| + |N^{\textnormal{unique}}_{11}(v_{10})|
      - 2 |f|,
  \end{equation}
  where $f = \{(x_{00}, x_{10}, x_{01}, x_{11}) \in F
    : x_{00} \in V_{00}, x_{10} \in v_{10}, x_{01} \in v_{01}, x_{11} \in V_{11}\}$
    are the squares with vertices in $v_{10}$ and $v_{01}$.
  
  By Corollary~\ref{cor:freely-implies-lossless},
    $(V_{01}, V_{11}, E_{*1})$ is a $(c_\downarrow |V_{0*}|/|V_{01}|, \epsilon_\downarrow)$-lossless expander
    and $(V_{10}, V_{11}, E_{1*})$ is a $(c_\rightarrow |V_{*0}|/|V_{10}|, \epsilon_\rightarrow)$-lossless expander.
  Because $|v_{01}| < c_\downarrow |V_{0*}|$ and $|v_{10}| < c_\rightarrow |V_{*0}|$,
  by Lemma~\ref{lem:unique-expander}
  \begin{equation}
    |N^{\textnormal{unique}}_{11}(v_{01})| \ge (1-2\epsilon_\downarrow) w_\downarrow |v_{01}|,
  \end{equation}

  \begin{equation}
    |N^{\textnormal{unique}}_{11}(v_{10})| \ge (1-2\epsilon_\rightarrow) w_\rightarrow |v_{10}|.
  \end{equation}
  Now, we suffice to upper bound the number of squares $|f|$.

  We write the total number of squares as a sum 
    of the number of squares over each $x_{00} \in V_{00}$,
  \begin{equation}
    |f| = \sum_{x_{00} \in V_{00}} d_\rightarrow(x_{00}) d_\downarrow(x_{00}),
  \end{equation}
  where $d_\rightarrow(x_{00}) = \deg_{v_{01}}(x_{00}), 
    d_\downarrow(x_{00}) = \deg_{v_{10}}(x_{00})$.

  By locally minimal,
  \begin{equation} \label{eq:local-minimal}
    d_\rightarrow(x_{00})/w_\rightarrow + d_\downarrow(x_{00})/w_\downarrow 
    \le 1.
  \end{equation}

  Because $|v_{10}| < c_\downarrow|V_{0*}|/w_\uparrow$,
  by Lemma~\ref{lem:lossless-split},
    we can find 
    $d^1_{\downarrow}(x_{00}), d^2_{\downarrow}(x_{00})$
    such that
    
  \begin{equation} \label{eq:split-down}
    d_{\downarrow}(x_{00}) = d^1_{\downarrow}(x_{00})
    + d^2_{\downarrow}(x_{00}),
  \end{equation}

  \begin{equation} \label{eq:bound-d1-down}
    \sum_{x_{00} \in V_{00}} d^1_\downarrow(x_{00}) = |v_{10}|,
  \end{equation}
  
  \begin{equation} \label{eq:bound-d2-down}
    \{d^2_\downarrow(x_{00})\}_{x_{00} \in V_{00}}
    \preceq \{\epsilon_\downarrow w_\downarrow^
    {\lceil \frac{w_\uparrow}{\epsilon_\downarrow w_\downarrow} |v_{10}|\rceil}\},
  \end{equation}

  Similarly,
    because $|v_{01}| < c_\rightarrow|V_{*0}|/w_\leftarrow$,
    we can find 
    $d^1_{\rightarrow}(x_{00}), d^2_{\rightarrow}(x_{00})$
    such that

  \begin{equation} \label{eq:split-right}
    d_{\rightarrow}(x_{00}) = d^1_{\rightarrow}(x_{00})
    + d^2_{\rightarrow}(x_{00}),
  \end{equation}

  \begin{equation} \label{eq:bound-d1-right}
    \sum_{x_{00} \in V_{00}} d^1_\rightarrow(x_{00}) = |v_{01}|,
  \end{equation}

  \begin{equation} \label{eq:bound-d2-right}
    \{d^2_\rightarrow(x_{00})\}_{x_{00} \in V_{00}}
    \preceq \{\epsilon_\rightarrow w_\rightarrow^
    {\lceil \frac{w_\leftarrow}{\epsilon_\rightarrow w_\rightarrow} |v_{01}|\rceil}\}.
  \end{equation}

  Now,
  \begin{align*}
    |f| &= \sum_{x_{00} \in V_{00}} d_\rightarrow(x_{00}) d_\downarrow(x_{00}) \\
        &= \sum_{x_{00} \in V_{00}, i=1, j=1}^{i=2, j=2} d^i_\rightarrow(x_{00}) d^j_\downarrow(x_{00}),
  \end{align*}
  and we will bound 4 different cross terms individually.

  For $i=1, j=1$, 
    by Equation~\ref{eq:local-minimal}, Equation~\ref{eq:bound-d1-down}, Equation~\ref{eq:bound-d1-right},
  \begin{align*}
    &|v_{10}|/w_\downarrow + |v_{01}|/w_\rightarrow \\
    &= \sum_{x_{00} \in V_{00}} 
      d^1_\downarrow(x_{00})/w_\downarrow + d^1_\rightarrow(x_{00})/w_\rightarrow \\
    &\ge \sum_{x_{00} \in V_{00}}
      (d^1_\downarrow(x_{00})/w_\downarrow + d^1_\rightarrow(x_{00})/w_\rightarrow) \\
    & (d_\downarrow(x_{00})/w_\downarrow + d_\rightarrow(x_{00})/w_\rightarrow) \\
    &\ge \sum_{x_{00} \in V_{00}}
      (d^1_\downarrow(x_{00})/w_\downarrow + d^1_\rightarrow(x_{00})/w_\rightarrow)^2 \\
    &\ge \sum_{x_{00} \in V_{00}}
      4 (d^1_\downarrow(x_{00})/w_\downarrow) (d^1_\rightarrow(x_{00})/w_\rightarrow).
  \end{align*}

  Therefore,
  \begin{equation*}
    \sum_{x_{00}\in V_{00}} d^1_\downarrow(x_{00}) d^1_\rightarrow(x_{00})
    \le (w_\rightarrow|v_{10}| + w_\downarrow|v_{01}|)/4.
  \end{equation*}

  For the other cases where $i \ne 1$ or $j \ne 1$,
    we use the fact that when $A \preceq A'$ and $B \preceq B'$
    and all their elements are nonnegative,
    we have $\sum_{k=1}^{k_{\max}} a_k b_k \le \sum_{k=1}^{k_{\max}} a'_k b'_k$,
    where $\{a_k\}_{k=1}^{k_{\max}}, \{a'_k\}_{k=1}^{k_{\max}},\{b_k\}_{k=1}^{k_{\max}}, \{b'_k\}_{k=1}^{k_{\max}}$
      are the sorted sequences of $A, A', B, B'$ in the descending order
      with appended 0s.
  We show $\sum_{k=1}^{k_{\max}} a_k b_k \le \sum_{k=1}^{k_{\max}} a'_k b_k \le \sum_{k=1}^{k_{\max}} a'_k b'_k$.
  \begin{align}
    & \sum_{k=1}^{k_{\max}} a'_k b_k - \sum_{k=1}^{k_{\max}} a_k b_k \\
    &= \sum_{k=1}^{k_{\max}} (a'_k-a_k) b_k \\
    &= \sum_{k=1}^{k_{\max}} (\sum_{l=1}^k a'_l-a_l) (b_k-b_{k+1}) \\
    &\ge 0,
  \end{align}
  where $b_{k_{\max}+1}$ is defined as 0.
  The last inequality holds because each term is nonnegative.
    $\sum_{l=1}^k a'_l-a_l \ge 0$ because $A \preceq A'$
    and $b_k-b_{k+1} \ge 0$ for $1 \le k < k_{\max}$ because the sequence is sorted in the descending order,
    and $b_{k_{\max}}-b_{k_{\max}+1} = b_{k_{\max}} \ge 0$ because all elements are nonnegative.

  Therefore,
  \begin{equation*}
    \sum_{x_{00}\in V_{00}} d^1_\downarrow(x_{00}) d^2_\rightarrow(x_{00})
    \le \epsilon_\rightarrow w_\rightarrow |v_{10}|,
  \end{equation*}
  \begin{equation*}
    \sum_{x_{00}\in V_{00}} d^2_\downarrow(x_{00}) d^1_\rightarrow(x_{00})
    \le \epsilon_\downarrow w_\downarrow |v_{01}|,
  \end{equation*}
  \begin{align*}
    & \sum_{x_{00}\in V_{00}} d^2_\downarrow(x_{00}) d^2_\rightarrow(x_{00}) \\
    &\le \epsilon_\downarrow w_\downarrow \epsilon_\rightarrow w_\rightarrow 
    \min(\lceil \frac{w_\uparrow}{\epsilon_\downarrow w_\downarrow} |v_{10}| \rceil,
    \lceil \frac{w_\leftarrow}{\epsilon_\rightarrow w_\rightarrow} |v_{01}| \rceil),
  \end{align*}

  Overall,
  \begin{align*}
    & |f| \\
    \le& (w_\rightarrow |v_{10}| + w_\downarrow |v_{01}|)/4 \\
    +& \epsilon_\rightarrow w_\rightarrow |v_{10}|
     + \epsilon_\downarrow w_\downarrow |v_{01}| \\
    +& \epsilon_\downarrow w_\downarrow \epsilon_\rightarrow w_\rightarrow 
    \min(\lceil \frac{w_\uparrow}{\epsilon_\downarrow w_\downarrow} |v_{10}| \rceil,
    \lceil \frac{w_\leftarrow}{\epsilon_\rightarrow w_\rightarrow} |v_{01}| \rceil) \\
    \le& (w_\rightarrow |v_{10}| + w_\downarrow |v_{01}|)/4 \\
    +& \epsilon_\rightarrow (w_\rightarrow |v_{10}|)
    + \epsilon_\downarrow (w_\downarrow |v_{01}|) \\
    +& 2\min(w_\uparrow \epsilon_\rightarrow (w_\rightarrow |v_{10}|),
    w_\leftarrow \epsilon_\downarrow (w_\downarrow |v_{01}|)),
  \end{align*}
  where we use $\lceil \frac{w_\uparrow}{\epsilon_\downarrow w_\downarrow} |v_{10}| \rceil \le 2\frac{w_\uparrow}{\epsilon_\downarrow w_\downarrow} |v_{10}|$.
  If $|v_{10}| = 0$, the inequality holds trivially.
  Otherwise, because $w_\uparrow > w_\downarrow, \epsilon_\downarrow < 1, |v_{10}| \ge 1$,
  we have $\frac{w_\uparrow}{\epsilon_\downarrow w_\downarrow} |v_{10}| > 1$.
  Therefore,
  $\lceil \frac{w_\uparrow}{\epsilon_\downarrow w_\downarrow} |v_{10}| \rceil \le 2 \frac{w_\uparrow}{\epsilon_\downarrow w_\downarrow} |v_{10}|$.
  Similarly,
  $\lceil \frac{w_\leftarrow}{\epsilon_\rightarrow w_\rightarrow} |v_{01}| \rceil \le 2 \frac{w_\leftarrow}{\epsilon_\rightarrow w_\rightarrow} |v_{01}|$.
  
  By assumption $w_\uparrow \epsilon_\rightarrow \le \epsilon,
    \epsilon_\rightarrow \le \epsilon,
    \epsilon_\downarrow  \le \epsilon$,
    so $|f| \le (1+12 \epsilon)(w_\rightarrow |v_{10}| + w_\downarrow |v_{01}|)/4$.

  Therefore,
  \begin{align}
    &|c_0| \\
    &\ge |N^{\textnormal{unique}}_{11}(v_{01})| + |N^{\textnormal{unique}}_{11}(v_{10})| - 2 |f| \\
    &\ge ((1-2\epsilon_\rightarrow) w_\rightarrow |v_{10}| + (1-2\epsilon_\downarrow) w_\downarrow |v_{01}|) \\
    &- (1+12\epsilon)(w_\rightarrow |v_{10}| + w_\downarrow |v_{01}|)/2 \\
    &\ge (\frac{1}{2}-8\epsilon)(w_\rightarrow |v_{10}| + w_\downarrow |v_{01}|).
  \end{align}

\end{proof}

\begin{remark} \label{rem:skew}
It may seem unnatural that the graph is skewed,
  namely the condition in \ref{thm:1-sided-lossless-expander-with-symmetry}
  is not symmetric.
  This is originated from $\min(\lceil \frac{w_\uparrow}{\epsilon_\downarrow w_\downarrow} |v_{10}| \rceil,
  \lceil \frac{w_\leftarrow}{\epsilon_\rightarrow w_\rightarrow} |v_{01}| \rceil)$ used in the proof.
  This means our proof doesn't work when $w_\rightarrow = w_\downarrow$.
  We have attempted to find such a proof, however, we failed and maybe it is impossible.
    Because we didn't impose two-dimensional constraints similar to 
    product-expansion in \cite{panteleev2021quantum}
    and robust testability in \cite{dinur2021locally}.
  All we assumed here is that the parameters of $X^\updownarrow$ and $X^\leftrightarrow$ satisfy some inequalities.

Nevertheless, 
  if we additionally assume certain two-dimensional structures
  both properties can be achieved.
However, the proof is quite different and will not be discussed further here.

\end{remark}

\end{document}